\begin{document}

\title*{Distributed Change Detection via \\ Average Consensus over Networks}
\author{Qinghua Liu, Rui Zhang, and Yao Xie}
\institute{Qinghua Liu \at Department of Electrical Engineering, Princeton University, New Jersey, USA \\ \email{qinghual@princeton.edu}
\and Rui Zhang \at H. Milton Stewart School of Industrial and Systems Engineering, Georgia Institute of Technology, 755 Ferst Drive, NW, Atlanta, GA 30332 \\\email{ruizhang\_ray@gatech.edu}
\and  Yao Xie \at H. Milton Stewart School of Industrial and Systems Engineering, Georgia Institute of Technology, 755 Ferst Drive, NW, Atlanta, GA 30332 \\ \email{yao.xie@isye.gatech.edu}}
%
%
\maketitle
\abstract{Distributed change-point detection has been a fundamental problem when performing real-time monitoring using sensor-networks.  We propose a distributed detection algorithm, where each sensor only exchanges CUSUM statistic with their neighbors based on the average consensus scheme, and an alarm is raised when local consensus statistic exceeds a pre-specified global threshold.  We provide theoretical performance bounds showing that the performance of the fully distributed scheme can match the centralized algorithms under some mild conditions. Numerical experiments demonstrate the good performance of the algorithm especially in detecting asynchronous changes.}

\section{Introduction}
\label{sec:1}
Detecting an abrupt change from data collected by distributed sensors has been a fundamental problem in diverse applications such as cybersecurity \cite{lakhina2004diagnosing, tartakovsky2002efficient} and environmental monitoring \cite{chen2017textsf, valero2017real}. In various applications, it is important to perform {\it distributed detection}, in that sensors perform local decisions rather than having to send all information to a central hub to form a global decision. Some common reasons include (1) local decision at each sensor is needed, such as VANET \cite{li2016order, karagiannis2011vehicular}, where the vehicles need to make immediate decision for traffic condition, by using their own information and by communicating with their neighbors, and (2) limited communication bandwidth, e.g., in distributed geophysical sensor networks \cite{valero2017real} where sensors can only communicate with their neighboring sensors, but cannot communicate to far-away sensors since the channel bandwidth is interference limited, and (3) avoid communicate delay: for seismic early warning systems, it is also not ideal for seismic sensors to send all information to a fusion hub and receiving a global decision, but rather let them to make local decision, to avoid two-way communication delay.


With the above motivation, in this paper, we propose a distributed multi-sensor change-point detection procedure based on average consensus \cite{xiao2004fast}. The scheme lets sensors to exchange their local CUSUM statistics and makes a local decision by comparing their {\it consensus} statistic with a statistic. Note that this scheme does not involve explicit point-to-point message passing or routing; instead, it diffuses information across the network by updating their own statistics by performing a weighted average of neighbors' statistics \cite{xiao2005scheme}. The main theoretical contributions of the paper are the analysis of our detection procedure in terms of the two fundamental performance metrics: the average run length (ARL) which is related to the false alarm rate, and the expected detection delay.  We show that for a system consisting of $N$ sensors, using the average consensus scheme, the expected detecting delay can nearly be reduced by a factor of $N$ compared to a system without communication, under the same false alarm rate. We demonstrate the good performance of our proposed method via numerical examples.

\subsection{Related work}

Various distributed change-point detection methods have been developed based on the classic CUSUM \cite{page1954continuous} and Shiryaev-Roberts statistics. Many existing distributed methods \cite{tartakovsky2002efficient, tartakovsky2003quickest, tartakovsky2008asymptotically, mei2010efficient} assume a fusion center that gathers information (raw data or statistics) from all sensors to perform decision globally. Thus, they are different from our approach where each sensor performs a local decision. 
On the other hand, there is another type of approaches such as the ``one-shot'' scheme, where each sensor makes a decision using its own data and only transmits a one-bit signal to central hub once a local alarm has been trigged (e.g., \cite{hadjiliadis2009one,tartakovsky2008asymptotically}). However, this approach can be improved if the change is observed by more than one sensor, and we can allow neighboring sensors to exchange information.  Fig. \ref{fig0} illustrates a comparison of our approach versus the other two types of approaches. 


\begin{figure}
	\centering
	\begin{minipage}[c]{0.3\textwidth}
		\centering
\includegraphics[width = \textwidth]{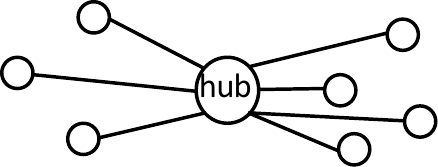}
	\end{minipage}
	\begin{minipage}[c]{0.3\textwidth}
		\centering
\includegraphics[width = \textwidth]{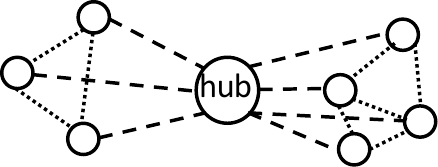}
	\end{minipage}
	\begin{minipage}[c]{0.3\textwidth}
		\centering
\includegraphics[width = \textwidth]{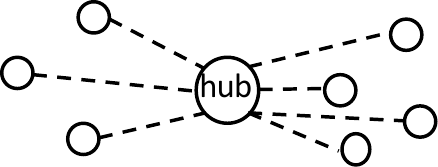}
	\end{minipage}
\caption{Comparison of centralized approach (left), our approach (middle) and one-shot scheme (right). Solid line: communication of raw data or statistics. Dash line: communication of one-bit decisions. Dash dots: communication of statistics.}
\label{fig0}
\vspace{-0.1in}
\end{figure}

Some recent works \cite{li2016order,SahuKar2016,LiuMei2017} study a related but different problem: distributed sequential hypothesis test based on average consensus. A major difference, though, is that in the sequential hypothesis test, the local log-likelihood statistic accumulates linearly, while in sequential change-point detection, the local detection statistic accumulates nonlinearly as a reflected process (through CUSUM). This results in a more challenging case and requires significantly different techniques. 

Moreover, recent works \cite{raghavan2010quickest, ludkovski2012bayesian, fellouris2016second, kurt2017multi} study the model under the general setting where not all the nodes have a change point or have different change points; \cite{kurt2017multi, raghavan2010quickest,ludkovski2012bayesian} assume the influence from the source propagates to each sensor nodes sequentially under some prior distribution.  Here we do not make an assumption about how the change is observed by different sensors. 


\subsection{Background}

We first introduce some necessary notations.
Given two distinct distributions $\mathcal{P}_1$  and $\mathcal{P}_2$. Let the probability density function of $\mathcal{P}_1$ and $\mathcal{P}_2$ be $f_1({\bf x})$ and $f_2({\bf x})$, respectively.  Then the log-likelihood ratio function (LLR) between distribution $\mathcal{P}_2$ and $\mathcal{P}_1$ is defined as
$
L({\bf x}) = \log [f_2({\bf x})/f_1({\bf x})].
$

Assume a sequence of observations $\{ {\bf x}^{t}\}_{t=1}^{+\infty}$.  There may exist a change-point $\tau$, such that for $t < \tau$, ${\bf x}^t\stackrel{i.i.d.}\sim \mathcal{P}_1$ and for $t\ge \tau$, ${\bf x}^t\stackrel{i.i.d.}\sim \mathcal{P}_2$. 
%
The classical CUSUM procedure is based on the LLR to detect the change of the data distribution. It is a stopping time that stops the first time the LLR based statistic exceeds a threshold $b$:
$
T_s = \inf\big\{t>0: \max_{1\le i \le t}\sum_{k=i}^{t}L({\bf x}^k)\ge b \big\}.
$
The stopping time $T_s$ has a recursive implementation:
$
y^{t+1}  = \max \{y^t+L ({\bf x}^{t+1}),0\}, y^0 = 0,$ and 
$T_s = \inf\big\{t>0: y^t\ge b \big\}. $


\section{Distributed consensus detection procedure}

We represent an $N$-sensor network using a graph $G = (\mathcal{V},\mathcal{E})$, where $\mathcal{V}$ and $\mathcal{E}$ are the sensor set and edge set, respectively.
There exists an edge between sensor $i$ and sensor $j$ if and only if they can communicate with each other.
Without loss of generality, we assume that the $G$ is connected (if there is more than one connected component, we can apply our algorithm to each of them separately.) Assume the topology of the sensor network is known (e.g., by design). 

Denote data observed by the sensor $v$ at time $t$ as ${\bf x}_{v}^{t}$. Consider the following change-point detection problem. When there is no change, the sensor observations ${\bf x}_v^t \stackrel{i.i.d.}\sim \mathcal{P}_{1},\ \forall v, t = 1, 2, \ldots$.  When there is a change, at least one sensor will be affected a change that happens at an unknown time $\tau$, such as $\ {\bf x}_v^1,\cdots,{\bf x}^{\tau-1}_v	\stackrel{\rm i.i.d.}\sim \mathcal{P}_{1},$  and ${\bf x}_v^{\tau},\cdots,{\bf x}^{T}_v	\stackrel{\rm i.i.d.}\sim \mathcal{P}_{2}$. Our goal is to detect the change as quickly as possible (for at least one sensor that has been affected by the change), subject to the false-alarm constraint. 


 
Our {\it distributed consensus change-point detection procedure} consists of three steps at each sensor: 
(1) Each sensor forms local CUSUM statistic using their own data:  
$
y_v^{t+1}  = \max \{y_v^t+L_v ({\bf x}_v^{t+1}),0\}, \   v\in\mathcal{V}; 
$
(2) Sensors exchange information with their neighbors according to the pre-determined network topology and weights to form the {\it consensus} statistic: 
$
z_v^{t+1} = \sum_{u \in \mathcal{N}(v)} W_{vu}\left(z_u^t + y_u^{t+1}- y_u^{t}   \right),\  v\in\mathcal{V}, $
where $\mathcal{N}(v)$ includes sensor $v$ and its neighbors.  
(3) Perform detection by comparing $z_v^t$ with a predetermined threshold $b$ at each sensor $v\in\mathcal{V}$.
If a global decision is necessary, as long as there exists one sensor $v\in\mathcal{V}$ that raises an alarm: $z_v^t\ge b$ a global alarm is raised.
In summary, our detection procedure corresponds to the following stopping time 
\begin{equation}\label{stoptime}
T_s = \inf\big\{t>0: \max_{v\in\mathcal{V}} z_v^t \ge b\big\}.
\end{equation}

We assume the weighted consensus matrix ${\bf W} \in \mathbb{R}^{N \times N}$, which the sensors use to exchange information, will satisfy the following conditions.  As long as the graph is connected, the consensus matrix ${\bf W}$ satisfying the above conditions always exists \cite{boyd2004fastest}. 
\begin{enumerate}
\item[(i)] $W_{ij}>0$ if sensor $i$ and sensor $j$  are connected and $W_{ij}=0$ if sensor  $i$ and sensor $j$  are not connected.
\item[(ii)] Assume communication in the network is symmetrical, i.e. $W_{ij} = W_{ji}$; this happens when sensors broadcast to their neighbors. 
\item[(iii)] ${\bf W}{\bf 1} = {\bf 1}$, meaning that the information is not augmented or shrunken during communication, where {\bf 1} is the all-one vector. 
\item[(iv)] The second largest eigenvalue modulus of the matrix $\lambda_{2}({\bf W})$ is smaller than $1$ (to ensure convergence of the algorithm). 
\end{enumerate}

%

\section{Theoretical analysis of ARL and EDD}

We now present the main theoretical results.  We adopt the standard performance metrics for sequence change-point detection: the average run length (ARL) and the expected detection delay (EDD) \cite{xie2013sequential}, defined as
${\rm ARL}  =\mathbb{E}[T_s|\tau = \infty]$, and ${\rm  EDD} = \mathbb{E}[T_s|\tau = 1]$ (assuming the change occurs that the first moment, for simplicity).  
In the definition above, $\tau = \infty$ means that the change-point never occurs. 
Intuitively, ${\rm EDD}$ can be interpreted as the delay time before detecting the change and ${\rm ARL}$ can be interpreted as the expected duration between two false alarms. 
We make the following assumptions
\begin{enumerate}
\item[(1)] All the sensors share the same pre- and post-change distributions $\mathcal{P}_1$ and $\mathcal{P}_2$ (if the change occurs). 
\item[(2)]  For ${\bf x} \sim \mathcal{P}_1$ and ${\bf x} \sim \mathcal{P}_2$, random  $L({\bf x})$ follows a non-central sub-Gaussian distribution \cite{buldygin1980sub}.  Assumption for LLR to be a non-central sub-gaussian distribution can capture many commonly seen cases. For instance, 
 Gaussian distributions $\mathcal{P}_1 = \mathcal{N}({\bf 0},{\bf I})$,  $\mathcal{P}_2 = \mathcal{N}({\bf u},{\bf I})$ lead to $L({\bf x}) = {\bf u}^{\rm T}{\bf x}-\Vert {\bf u}\Vert^2/2$, which follows
$L({\bf x})\sim \mathcal{N}(\Vert {\bf u}\Vert^2/2,\Vert {\bf u}\Vert^2)$.
  \end{enumerate}
The above assumption is made purely for theoretical analysis. The detection procedure can still  be implemented without these assumptions.

First we present an asymptotic lower bound for the ARL. 
Assume that the mean and variance of $L({\bf x})$ when  ${\bf x} \sim \mathcal{P}_1$  are given by $\mu_1$ and $\sigma_1$, respectively.
Note that  $(-\mu_1$) corresponds to the Kullback-Leibler (KL)-divergence from $\mathcal{P}_2$ to $\mathcal{P}_1$, and $(-\mu_1) \geq 0$  always holds, which can be shown using Jensen's inequality.

\begin{theorem}[{\bf Lower-bound for ARL}]\label{thm1}
When $b\rightarrow \infty$, we have
 \begin{align*}
& {\rm ARL} \ge  \exp\bigg\{ \frac{(-\mu_1) b}{\sigma_1^{2}}\left[2N-2\left(\frac{N}{N+1}\right)^2\right] \\
&\quad + \left[\frac{ \sqrt{-\mu_1}\mu_1\lambda_{2}({\bf W})}{\sigma_1^2\left[1-\lambda_{2}({\bf W})\right]}\left(4N^2-4N\left(\frac{N}{N+1}\right)^2\right)+o(1)\right]\sqrt{b}\bigg\}.
 \end{align*}
\end{theorem}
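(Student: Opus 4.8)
The plan is to reduce the multi-sensor stopping rule to a first-passage problem for the network-averaged CUSUM and then control that first passage by an exponential change of measure. First I would stack the statistics as $\mathbf{z}^t=(z_1^t,\dots,z_N^t)^{\mathrm T}$, $\mathbf{y}^t=(y_1^t,\dots,y_N^t)^{\mathrm T}$ and unroll the consensus recursion into
\[
\mathbf{z}^t=\sum_{s=1}^{t}\mathbf{W}^{\,t-s+1}\,\Delta\mathbf{y}^s,\qquad \Delta\mathbf{y}^s:=\mathbf{y}^s-\mathbf{y}^{s-1}.
\]
Using the symmetric spectral decomposition $\mathbf{W}=\sum_i\lambda_i\mathbf{q}_i\mathbf{q}_i^{\mathrm T}$ with $\lambda_1=1$, $\mathbf{q}_1=\mathbf{1}/\sqrt N$, together with $\mathbf{W}\mathbf{1}=\mathbf{1}$, each coordinate splits as $z_v^t=\bar{y}^t+e_v^t$, where $\bar{y}^t=\frac1N\sum_{u}y_u^t$ is the eigenvalue-one (network-average) component and $e_v^t$ is a consensus error whose coefficients live in the complement of $\mathbf{1}$ and therefore decay geometrically at rate $\lambda_2(\mathbf{W})$. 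The conservation identity $\sum_v z_v^t=\sum_v y_v^t$ is then immediate. Consequently the alarm event $\{\max_v z_v^t\ge b\}$ equals, up to the error, the event $\{\bar{y}^t\ge b\}$, i.e. $\{\sum_u y_u^t\ge Nb\}$; this is the source of the factor $N$ multiplying $b$ in the exponent.

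Next I would convert the ARL into a first-passage estimate. For every horizon $m$, $\mathbb{E}_{\infty}[T_s]\ge m\,P_{\infty}(T_s>m)$, so it suffices to keep $P_{\infty}(T_s\le m)$ below a constant for $m$ as large as the claimed bound. Ignoring the error momentarily, $P_{\infty}(T_s\le m)\le P_{\infty}\big(\exists\,t\le m:\sum_u y_u^t\ge Nb\big)$, which is a first passage of the sum of $N$ independent reflected random walks, each with negative drift $\mu_1$ and variance $\sigma_1^2$ under $\mathcal{P}_1$. I would bound this by the exponential supermartingale obtained from tilting each walk by the root $\theta$ of $\mathbb{E}_{1}[e^{\theta L}]=1$ (in the Gaussian case $\theta=2(-\mu_1)/\sigma_1^2$; in general the non-central sub-Gaussian MGF supplies the same root as an upper bound), using the reflection $\max\{\cdot,0\}$ only through the supermartingale inequality. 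Doob's maximal inequality and optional stopping then bound the first passage to level $Nb$ by $\exp\{-\theta N b\,(\cdots)\}$, where the $(N/(N+1))^2$ discount appears to reflect the slightly sub-optimal tilt forced by the reflection and coupling; inverting yields the leading exponent $\frac{(-\mu_1)}{\sigma_1^2}\big[2N-2(N/(N+1))^2\big]\,b$.

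The $\sqrt b$ correction is produced by the consensus error $e_v^t$, which must be controlled uniformly over the exponentially long horizon $m$. Because its weights are geometric with ratio $\lambda_2(\mathbf{W})$, $e_v^t$ is sub-Gaussian with scale proportional to $\lambda_2/(1-\lambda_2)$, and a maximal inequality over $t\le m$ and $v\in\mathcal{V}$ shows $\max_{v,t\le m}e_v^t$ is at most of order $\frac{\lambda_2}{1-\lambda_2}\sqrt{\log m}$. Since $\log m$ is itself of order $(-\mu_1)b/\sigma_1^2$, this is of order $\sqrt{-\mu_1}\,\sqrt b/\sigma_1$. Replacing the threshold $b$ by the effective $b-\max_{v,t}e_v^t$ in the change-of-measure exponent shifts $\theta N b$ by a term of order $\sqrt b$ and reproduces the negative $\sqrt b$ contribution, carrying exactly the factor $\frac{\sqrt{-\mu_1}\,\mu_1\,\lambda_2}{\sigma_1^2(1-\lambda_2)}$ and the bracket $4N^2-4N(N/(N+1))^2=2N\cdot[2N-2(N/(N+1))^2]$, with everything beyond leading order collected into the $o(1)$.

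The main obstacle is the reflection combined with inter-sensor coupling. The increments $\Delta y_u^t$ are neither i.i.d. nor equal to the raw log-likelihood ratios $L_u^t$, so the textbook random-walk exponential martingale does not apply verbatim; I would have to either dominate each reflected CUSUM by an unreflected walk restarted at its last regeneration, or verify directly that the tilted product process remains a supermartingale across the $\max\{\cdot,0\}$ operation. The second delicate point is coupling the uniform control of the consensus error over an $e^{\Theta(b)}$ horizon to the first-passage computation, so that the $\sqrt b$ term emerges with the stated constants instead of being absorbed into an uncontrolled $o(\sqrt b)$; matching the geometric $\lambda_2/(1-\lambda_2)$ weighting to the maximal-inequality scale is where most of the care is needed.
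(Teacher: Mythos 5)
Your skeleton matches the paper's in its outer layers: the spectral split $z_v^t=\bar y^t+e_v^t$ with the conservation identity $\sum_v z_v^t=\sum_v y_v^t$ is exactly the paper's Lemma \ref{lemma_1}, and bounding ARL through $\mathbb{E}_\infty[T_s]\ge m\,P_\infty(T_s>m)$ is how the paper closes. The genuine gap is the multi-sensor first-passage step. The exponential tilt with $\mathbb{E}_1[e^{\theta L}]=1$, $\theta=2(-\mu_1)/\sigma_1^2$, does not survive the reflection: for a single CUSUM one only gets $\mathbb{E}[e^{\theta y^{t+1}}\mid\mathcal{F}_t]\le e^{\theta y^t}+1$, so $\mathbb{E}[e^{\theta y^t}]\le 1+t$ and the product over $N$ independent sensors grows like $(1+t)^N$. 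Feeding this into Doob plus a union over $t\le m$ gives $P\big(\exists\,t\le m:\sum_u y_u^t\ge Nb\big)\le C\,m^{N+1}e^{-\theta Nb}$, hence only ${\rm ARL}\ge \exp\{\theta Nb/(N+1)\}$ --- strictly weaker than the claimed exponent $\theta Nb\,(1-N/(N+1)^2)$ (for $N=2$: $\tfrac23\theta b$ versus $\tfrac{14}{9}\theta b$). You also attribute the $(N/(N+1))^2$ discount to a ``sub-optimal tilt forced by the reflection,'' but that is not its source. The paper's mechanism, absent from your plan, is twofold: (i) a \emph{time-uniform} single-sensor tail bound (Lemma \ref{lemma_3}), obtained from $\mathbb{P}(y_j^t>K)\le\sum_{k\le t}\mathbb{P}\big(\sum_{q=k}^t L({\bf x}_j^q)>K\big)\le \frac{2K}{-\mu_1}e^{2K\mu_1/\sigma_1^2}$ --- essentially your ``restart at the last regeneration'' idea made quantitative, with the sum over crossing origins converging uniformly in $t$; and (ii) a combinatorial allocation bound (Lemma \ref{lemma_2}): $\mathbb{P}\big(\sum_j y_j^t>\bar b\big)\le\sum_{C(M,N)}\prod_j\mathbb{P}\big(y_j^t>i_j\bar b/M\big)$ over lattice compositions with $M-N\le\sum_j i_j\le M$. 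Choosing $M=(N+1)^2$ costs only the $b$-independent constant $|C(M,N)|$ and the exponent factor $1-N/M=1-N/(N+1)^2$, which is precisely where $(N/(N+1))^2$ comes from --- a discretization loss, not a tilt penalty. With (i) and (ii) the union over the horizon costs a single factor $p$ rather than $p^{N}$, which is what makes the exponent linear in $N$ attainable.

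On the $\sqrt b$ correction, your stochastic route (sub-Gaussian consensus error, maximal inequality over an $e^{\Theta(b)}$ horizon, $\sqrt{\log m}\asymp\sqrt b$) is plausible in order, and since the correction is negative any smaller error bound would only strengthen the lower bound, so this part is repairable; but your claim that it carries ``exactly'' the stated factor is asserted, not derived, and your own scaling does not reproduce the paper's constant. The paper sidesteps the maximal-inequality bookkeeping entirely: it conditions on $B(\varepsilon,p)=\{|L({\bf x}_i^t)|<\varepsilon b,\ \forall i,\ t\le p\}$ with $\varepsilon=2\sqrt{-N\mu_1/b}$ (so $\varepsilon b\asymp\sqrt b$), on which the consensus error is \emph{deterministically} at most $\sqrt N\varepsilon\lambda_2 b/(1-\lambda_2)$, and pays the Hoeffding complement probability $2Np\,\exp\big(-(\varepsilon b-\mu_1)^2/(2\sigma_1^2)\big)$, which is balanced against the main term by this exact choice of $\varepsilon$; the coefficient of $\sqrt b$ then falls out of $\bar b=N\big(1-\sqrt N\varepsilon\lambda_2/(1-\lambda_2)\big)b$ multiplied through the exponent $\tfrac{2\bar b\mu_1}{\sigma_1^2}(1-N/M)$. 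Note also that your maximal-inequality route must contend with the fact that the increments $\Delta y_u^k$ are dependent through the reflection and only dominated by $|L_u^k|$ --- an obstacle you flag but do not resolve, and which the paper's deterministic conditioning avoids.
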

The  theorem shows that the ARL increases exponentially with threshold $b$, which is a desired property of a detection procedure. 
Moreover, it shows that it increases at least exponentially as $N$ increases.
The detailed proof is delegated to appendix.

%

Now we present an asymptotic lower bound to EDD. 
Denote the mean and the variance of $L({\bf x})$ for ${\bf x} \sim \mathcal{P}_2$  as $\mu_2$ and $\sigma_2$, respectively.
Note that $\mu_2 \geq 0$ corresponds to the KL-divergence from $\mathcal{P}_1$ to $\mathcal{P}_2$.
\begin{theorem}[Upper-bound for EDD]\label{thm2}
When $b\rightarrow\infty$, we have
\begin{align*}
{\rm EDD}\le \frac{b}{\mu_2}\left(1+o(1)\right).
\end{align*}
\end{theorem}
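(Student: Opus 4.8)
The plan is to collapse the $N$-sensor consensus stopping time to the first-passage time of a single scalar random walk with positive drift, and then invoke Wald's identity. First I would write the consensus recursion in vector form. Setting $\mathbf{y}^t = (y_1^t,\dots,y_N^t)^{\rm T}$ and $\mathbf{z}^t = (z_1^t,\dots,z_N^t)^{\rm T}$, step (2) of the procedure reads $\mathbf{z}^{t+1} = \mathbf{W}\big(\mathbf{z}^t + \mathbf{y}^{t+1} - \mathbf{y}^t\big)$, and unrolling from $\mathbf{z}^0 = \mathbf{0}$ gives
\[ \mathbf{z}^t = \sum_{k=1}^{t} \mathbf{W}^{\,t-k+1}\big(\mathbf{y}^k - \mathbf{y}^{k-1}\big). \]

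The key structural observation is that consensus preserves the network average. Conditions (ii)--(iii) make $\mathbf{W}$ doubly stochastic, so $\mathbf{1}^{\rm T}\mathbf{W} = \mathbf{1}^{\rm T}$ and hence $\mathbf{1}^{\rm T}\mathbf{W}^{k} = \mathbf{1}^{\rm T}$ for every $k$. Multiplying the display above by $\mathbf{1}^{\rm T}/N$ annihilates the matrix powers and the sum telescopes, yielding $\frac1N\sum_{v} z_v^t = \frac1N\sum_{v} y_v^t =: \bar y^t$. Since the maximum dominates the average, $\max_{v\in\mathcal{V}} z_v^t \ge \bar y^t$, so the procedure has already stopped by the time $\bar y^t$ crosses $b$; that is, $T_s \le \inf\{t : \bar y^t \ge b\}$.

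Next I would discard the CUSUM reflection in the favorable direction. The reflected recursion admits the representation $y_v^t = \max_{0\le i\le t}\sum_{k=i+1}^{t} L_v(\mathbf{x}_v^k)$, so taking $i=0$ gives the pointwise bound $y_v^t \ge S_v^t := \sum_{k=1}^{t} L_v(\mathbf{x}_v^k)$. Averaging, $\bar y^t \ge \bar S^t := \frac1N\sum_{v} S_v^t$, and therefore $T_s \le \hat T := \inf\{t : \bar S^t \ge b\}$. Under $\tau=1$ every increment $\frac1N\sum_{v} L_v(\mathbf{x}_v^k)$ is i.i.d.\ across $k$, sub-Gaussian by Assumption (2) (a finite average of sub-Gaussians), with positive mean $\mu_2$; thus $\bar S^t$ is a random walk with drift $\mu_2>0$ and $\hat T$ has finite expectation.

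Finally I would apply Wald's identity $\mathbb{E}[\bar S^{\hat T}] = \mu_2\,\mathbb{E}[\hat T]$. Writing $\bar S^{\hat T} = b + R_b$ with overshoot $R_b \ge 0$ gives $\mathbb{E}[\hat T] = (b + \mathbb{E}[R_b])/\mu_2$, whence $\mathrm{EDD} = \mathbb{E}[T_s\mid\tau=1] \le \mathbb{E}[\hat T]$. The only point that requires care is showing $\mathbb{E}[R_b] = o(b)$; in fact it is $O(1)$, since for a random walk with light-tailed (sub-Gaussian) increments the expected overshoot stays bounded as $b\to\infty$ by the renewal theorem. Granting this, $\mathbb{E}[\hat T] = (b/\mu_2)(1+o(1))$ and the claimed bound follows. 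I expect the overshoot estimate to be the only delicate ingredient; everything else is the deterministic chain $\max_{v} z_v^t \ge \bar y^t \ge \bar S^t$ that reduces the networked problem to a scalar first-passage problem.
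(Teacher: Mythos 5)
Your proof is correct, but it takes a genuinely different route from the paper's. Both arguments pivot on the same conservation identity $\sum_{v} z_v^t = \sum_{v} y_v^t$ (the paper's \eqref{sumz_y}), but you exploit it in the forward direction---$\max_v z_v^t \ge \bar y^t \ge \bar S^t$, collapsing $T_s$ to the first-passage time $\hat T$ of the network-averaged random walk---and finish with Wald's identity plus overshoot control. The paper instead argues by contrapositive: if $T_s = t$ then $z_j^{t-1} < b$ for all $j$, so by conservation some $y_j^{t-1} < b$, giving the union bound $\mathbb{P}(T_s = t) \le \sum_j \mathbb{P}(y_j^{t-1} < b)$; it then drops the CUSUM reflection exactly as you do ($y_j^{t-1} \ge \sum_{q=1}^{t-1} L({\bf x}_j^q)$) but controls the lower tail with Hoeffding's inequality and sums the resulting geometric-type series over $t > [b(1+\varepsilon)/\mu_2]$, showing the tail contribution to $\sum_t t\,\mathbb{P}(T_s=t)$ vanishes as $b \to \infty$. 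Your route buys a cleaner and in fact sharper nonasymptotic statement, $\mathbb{E}[T_s] \le (b + O(1))/\mu_2$, with one caveat of precision: ``the renewal theorem'' gives convergence of the expected overshoot $\mathbb{E}[R_b]$ only under a nonlattice assumption, whereas the uniform boundedness you actually need is Lorden's inequality, $\mathbb{E}[R_b] \le \mathbb{E}[(\xi_1^+)^2]/\mu_2$ for all $b$, which requires only a finite second moment---amply supplied by sub-Gaussianity; you should also record that $\mathbb{E}[\hat T] < \infty$ (immediate from an exponential bound on $\mathbb{P}(\bar S^t < b)$) before invoking Wald. The paper's elementary tail-sum computation avoids renewal theory altogether at the cost of the $\varepsilon$-bookkeeping in \eqref{eq12}--\eqref{eq15}. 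Note also that neither argument needs independence across sensors, only temporal i.i.d.: the paper works with one sensor at a time via the union bound, and your averaged increment has mean $\mu_2$ and remains sub-Gaussian under arbitrary cross-sensor dependence.
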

Comparing the upper bound with the lower bound in \cite{lorden1971procedures}, we may be able to show that the proposed procedure is first-order asymptotically optimal (which is omitted here due to space limit). 
Moreover, combining Theorem \ref{thm1} with Lemma \ref{thm2}, we can characterize the relationship between ARL and  EDD as follows
\begin{corollary}[ARL and EDD]\label{arl-edd}
When $b\rightarrow\infty$, if ${\rm ARL} \ge \gamma$, we have
\begin{align*}
{\rm EDD} \le \frac{\log\gamma \left(1+o(1)\right)}{N\mu_2} \times \frac{\sigma_1^2}{-2\mu_1\left(1-N/(N+1)^2\right)}.
\end{align*}
\end{corollary}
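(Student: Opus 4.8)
The plan is to chain the two preceding results in the obvious way: invert the ARL lower bound of Theorem~\ref{thm1} to find how large the threshold $b$ must be in order to meet the false-alarm constraint ${\rm ARL}\ge\gamma$, and then substitute that value of $b$ into the EDD upper bound of Theorem~\ref{thm2}. Both bounds hold in the regime $b\to\infty$, and since a more stringent constraint $\gamma\to\infty$ forces the chosen threshold $b\to\infty$ as well, the two asymptotic regimes are mutually consistent and may be composed. Note that, because the EDD bound $b/\mu_2$ is increasing in $b$, the relevant threshold is the \emph{smallest} one that secures the constraint.

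First I would isolate the dominant term in the exponent of Theorem~\ref{thm1}. Writing that bound as ${\rm ARL}\ge\exp\{Ab+(B+o(1))\sqrt{b}\}$, with $A$ and $B$ the coefficients of $b$ and $\sqrt{b}$ in the exponent, the leading coefficient simplifies via the identity $2N-2(N/(N+1))^2=2N\,[1-N/(N+1)^2]$ to
\begin{align*}
A=\frac{-\mu_1}{\sigma_1^2}\left[2N-2\left(\frac{N}{N+1}\right)^2\right]=\frac{-2\mu_1 N}{\sigma_1^2}\left[1-\frac{N}{(N+1)^2}\right].
\end{align*}
Since $-\mu_1\ge0$ we have $A\ge0$, so the $Ab$ term dominates the $B\sqrt{b}$ term as $b\to\infty$. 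To guarantee ${\rm ARL}\ge\gamma$ it then suffices to pick the smallest threshold for which the right-hand side of Theorem~\ref{thm1} already equals $\gamma$; taking logarithms and solving for $b$ yields
\begin{align*}
b=\frac{\log\gamma}{A}\,(1+o(1))=\frac{\log\gamma\,\sigma_1^2}{-2\mu_1 N\left[1-N/(N+1)^2\right]}\,(1+o(1)).
\end{align*}

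Substituting this threshold into Theorem~\ref{thm2}, i.e.\ ${\rm EDD}\le(b/\mu_2)(1+o(1))$, and collecting the $o(1)$ factors produces precisely the claimed inequality
\begin{align*}
{\rm EDD}\le\frac{\log\gamma\,(1+o(1))}{N\mu_2}\times\frac{\sigma_1^2}{-2\mu_1\left(1-N/(N+1)^2\right)}.
\end{align*}
The one step that requires genuine care — and the main obstacle — is justifying that the $\sqrt{b}$ term of Theorem~\ref{thm1} is absorbed into the $(1+o(1))$ factor when inverting for $b$. Because $b\sim\log\gamma/A$ forces $\sqrt{b}\sim\sqrt{\log\gamma/A}$, the relative size of the square-root contribution satisfies $|B|\sqrt{b}/(Ab)\sim|B|/\sqrt{A\log\gamma}\to0$ as $\gamma\to\infty$, so it is indeed of lower order and the inversion $b=(\log\gamma/A)(1+o(1))$ is legitimate regardless of the sign of $B$. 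Everything remaining is direct substitution and algebraic rearrangement.
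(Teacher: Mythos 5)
Your proposal is correct and is exactly the route the paper intends: the paper gives no explicit proof of Corollary~\ref{arl-edd}, stating only that it follows by ``combining Theorem~\ref{thm1} with [Theorem]~\ref{thm2},'' and your argument carries out precisely that combination --- inverting the ARL exponent via $2N-2(N/(N+1))^2 = 2N[1-N/(N+1)^2]$ to get $b = (\log\gamma/A)(1+o(1))$ and substituting into ${\rm EDD}\le (b/\mu_2)(1+o(1))$. Your care in absorbing the $\sqrt{b}$ term into the $(1+o(1))$ factor and in selecting the smallest calibrating threshold fills in the details the paper leaves implicit (with the one trivial remark that one needs $-\mu_1>0$ strictly, which holds since $\mathcal{P}_1\neq\mathcal{P}_2$).
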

Corollary \ref{arl-edd} shows that the ratio between the EDD of our algorithm and that of the one-shot scheme \cite{hadjiliadis2009one} is no larger than $\sigma_1^2/ [-2N\mu_1\left(1-N/(N+1)^2\right)]$. Similarly, by comparing Theorem \ref{arl-edd} with the results in \cite{tartakovsky2003quickest,mei2010efficient}, the ratio is no larger than $\sigma_1^2/ [-2\mu_1\left(1-N/(N+1)^2\right)]$.

\section{Numerical Experiments}

In this section, we present several numerical experiments to demonstrate the performance of our algorithm.  Assume  $\mathcal{P}_1$ is  $\mathcal{N}(0,1)$ and $\mathcal{P}_2$ is $\mathcal{N}(1,1)$. Thus, $\mu_1 = -0.5$, $\mu_2 = 0.5$ and $\sigma_1=\sigma_2=1$.
%
%
We consider a simple network with $N = 4$ for illustrative purposes. Consider two network topology, a line network (where sensors communicate with their neighbors) and K4 (a fully connected network, which can be viewed as unrealistic upper bound for performance). The second largest eigenvalue modulus for line network and K4 are 0.9 and 0, respectively. Their weight matrices are given by
\begin{align*}
\mbox{Line:\ }
 \left(
 \begin{matrix}
    5/8  & 3/8  & 0 & 0  \\
  3/8 & 1/2 & 1/8 & 0 \\
   0 & 1/8 & 1/2& 3/8 \\
   0 & 0 & 3/8 &5/8
  \end{matrix}
  \right)
   \quad
  \mbox{K4:\ }
 \left(
 \begin{matrix}
    1/4  & 1/4  & 1/4  & 1/4  \\
1/4  & 1/4  & 1/4  & 1/4  \\
1/4  & 1/4  & 1/4  & 1/4  \\
1/4  & 1/4  & 1/4  & 1/4
  \end{matrix}
  \right)
  \end{align*}
We compare the the performance of our proposed procedure with the one-shot scheme \cite{hadjiliadis2009one} and the centralized approach where the sum of all local CUSUM statistics is compared with a threshold. We calibrate the threshold of all approaches by simulation, so that they will have the same ARL when there is no change, to have a fair comparison.

{\bf Synchronous changes}. In the first experiment, we assume the change-point happens at the same time at all sensors.   The results are presented in Fig. \ref{fig_sameChange}.  We find that the performance of K4 and centralized approach are the same in this case, since all sensor information are used. Since the change-point happens at all sensors synchronously, the one-shot scheme is least favored because each sensor works alone and did not utilize information at other sensors.  


\begin{figure}[h]\label{fig3.0}
	\subfigures
	\centering
	\begin{minipage}{0.45\linewidth}
		\centering
		\includegraphics[width =\linewidth]{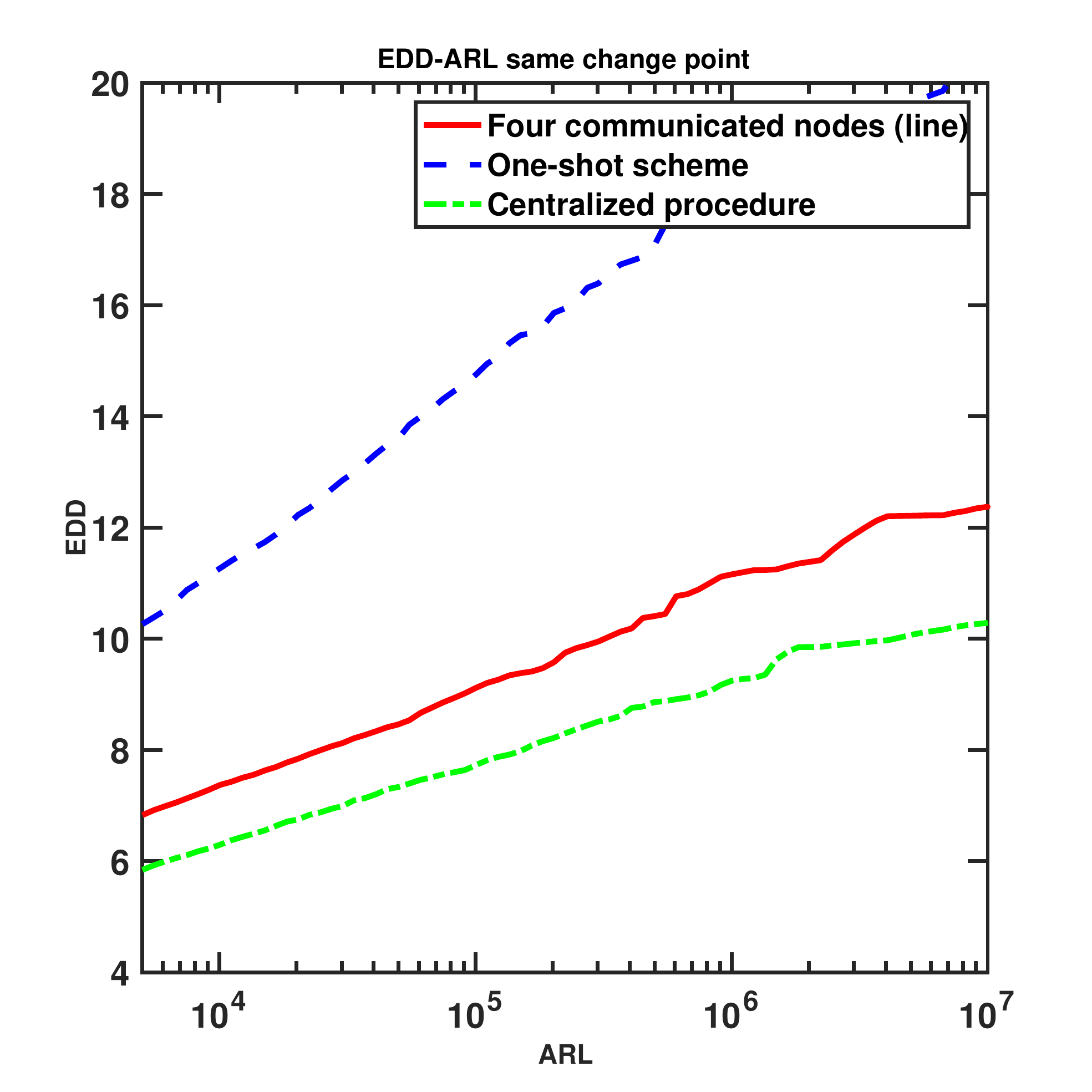}
		\caption{}
		\label{fig_sameChange}
	\end{minipage}
	\begin{minipage}[h]{0.45\textwidth}
		\centering
		\includegraphics[width =\textwidth]{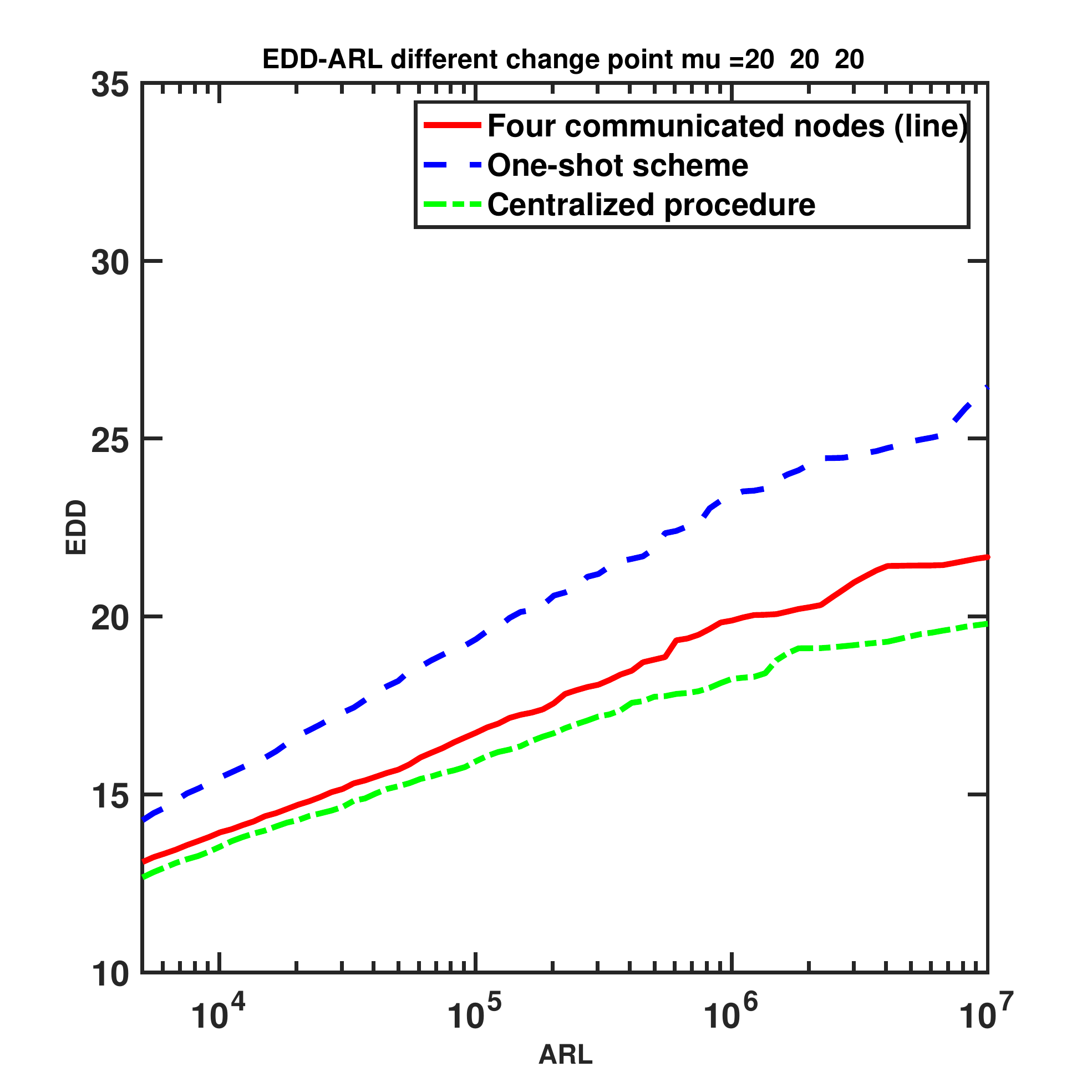}
		\caption{}
		\label{fig3.1}
	\end{minipage}
	
	\begin{minipage}[c]{0.45\textwidth}
		\centering
		\includegraphics[width =\textwidth]{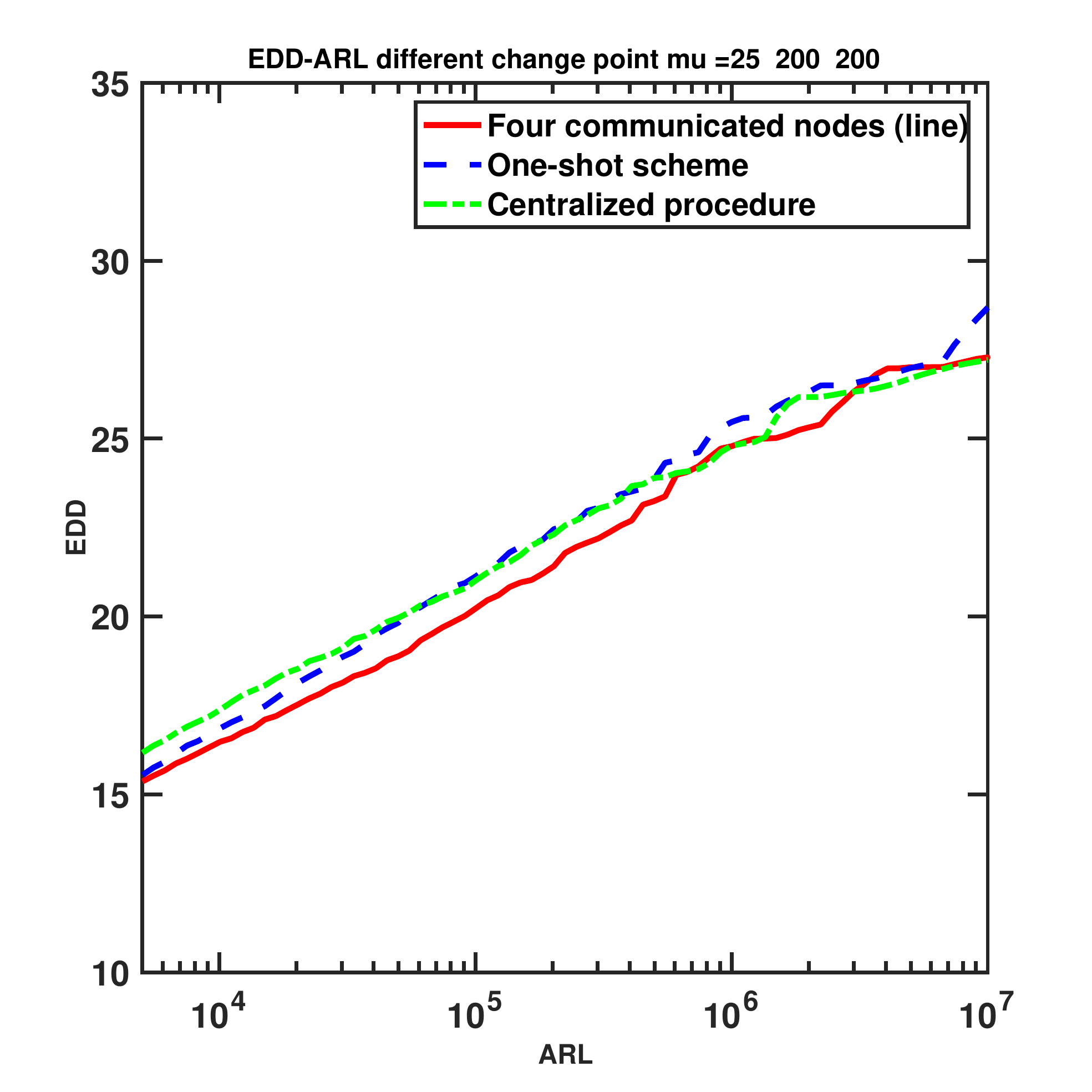}
		\caption{}
		\label{fig3.2}
	\end{minipage}
	\begin{minipage}[c]{0.45\textwidth}
		\centering
		\includegraphics[width =\textwidth]{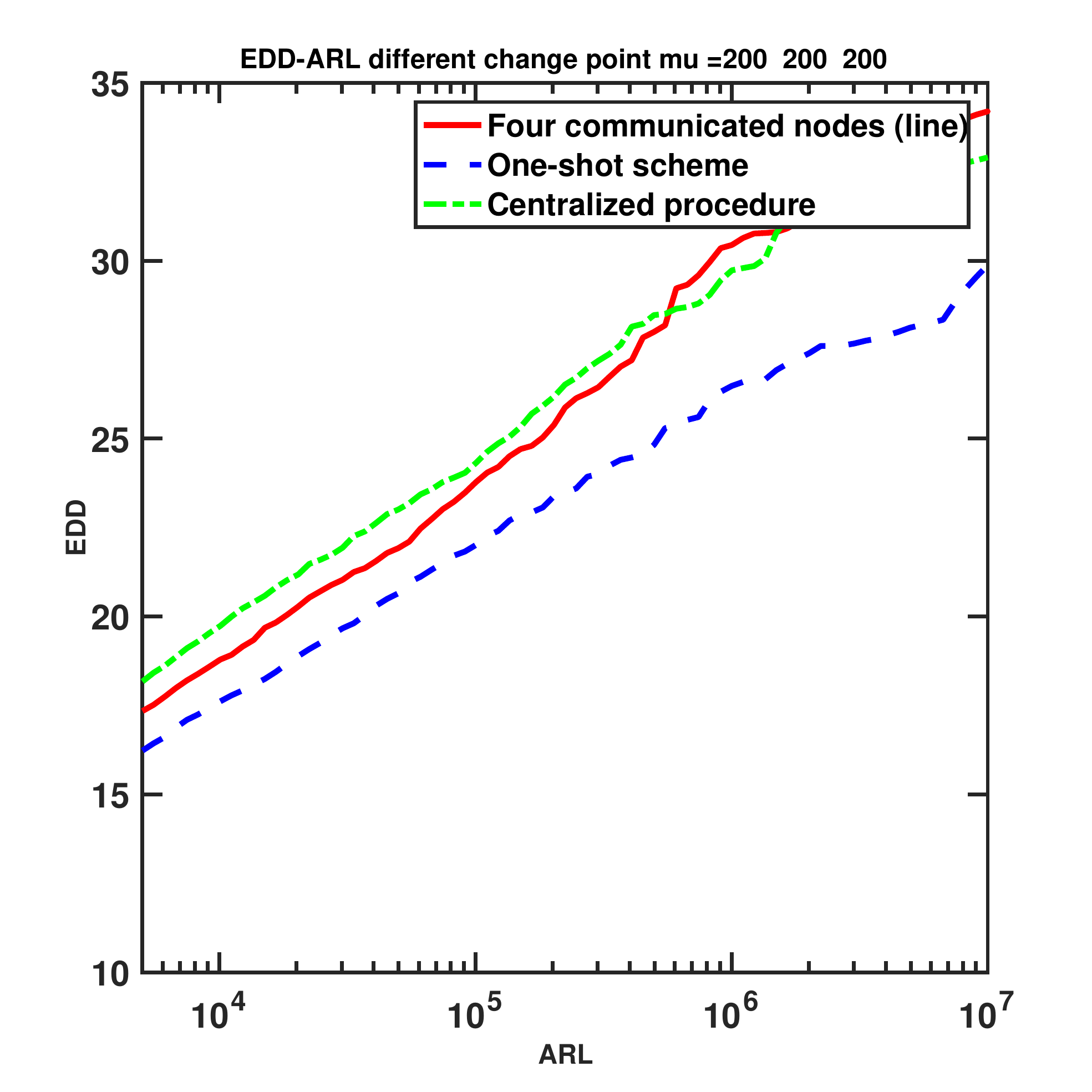}
		\caption{}
		\label{fig3.3}
	\end{minipage}
\end{figure}
\begin{figure}
\samenumber
\caption{Performance of our procedure, one-shot and centralized procedure. In \ref{fig_sameChange}, change-point happens at all sensors at the same time. In \ref{fig3.1}, \ref{fig3.2}, \ref{fig3.3}, change-points happens at all sensors with a random exponential delay.  When simulating EDD, we set $\tau_1 = 1$. In Fig. \ref{fig_sameChange}: $\tau_2,\tau_3,\tau_4 =\tau_1$, in Fig. \ref{fig3.1}: $\tau_2,\tau_3,\tau_4 \sim \mbox{Exp}(20)$; in Fig. \ref{fig3.2}: $\tau_2\sim \mbox{Exp}(25),\tau_3,\tau_4 \sim \mbox{Exp}(200)$; \ref{fig3.3}: $\tau_2,\tau_3,\tau_4 \sim \mbox{Exp}(200)$.}	
\end{figure}
%
%

{\bf Asynchronous changes.} The benefit of our proposed procedure is more significant in the asynchronous case, i.e., when the change-point happens at affected sensors at a different time. In this experiment, we consider three cases: (1) the change-point observed at sensors with random delay in a small range, (2) two sensors observe the change-point with random delay in a small range, and others with random delay in a larger range, and (3) all sensors experience a large range of random delay. Fig. \ref{fig3.1} shows that in Case (1), the centralized approach is the best which is similar to the synchronous change-point case. Fig. \ref{fig3.3} shows Case (3), the one-shot scheme is the best since the changes observed at different sensors may be far apart in time and less helpful in making a consensus decision. Fig. \ref{fig3.2} shows that in Case (2), our proposed procedure can be better than both the one-shot and centralized procedures. This shows that when there is a reasonable delay between changes at different sensors, the consensus algorithm may be the best approach. 


\begin{figure}[h]
\centering
\includegraphics[width =1\linewidth]{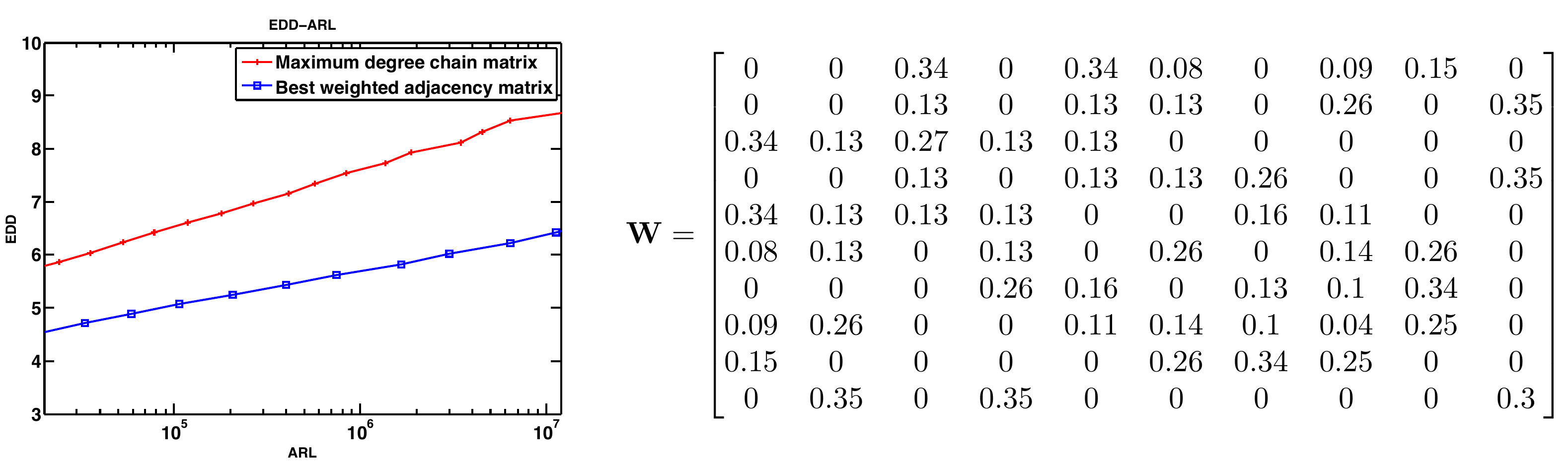}
\caption{Optimized consensus matrix \cite{boyd2004fastest} versus Maximum degree chain matrix}
\label{fig5}
\end{figure}

{\bf Optimize consensus weights.} To demonstrate the effect of consensus weights, we compare two networks with the same topology: the first network is the maximum degree chain network \cite{boyd2004fastest}, which uses unity weights on all edges, and the second network uses optimized weights, which are obtained using the algorithm in \cite{boyd2004fastest} for a fixed topology by minimizing the second largest eigenvalue modulus to achieve faster convergence.  We test the performance of our algorithm on the optimal consensus matrix and another type of consensus matrix, the maximum degree chain matrix. The topology and optimal weights used in this experiment is shown in Fig. \ref{fig5}.
%
The maximum degree chain network has the following weights 
\begin{eqnarray}
\begin{small}
W_{ij}=
\begin{cases}
1/\max_{i \in \mathcal{V}} d_i, &i\neq j \mbox{ and } (i,j)\in \mathcal{E}\cr   1-\sum_{j\in \mathcal{N}(i)} W_{ij}, &i=j \cr 0, &\mbox{otherwise}, \end{cases}
\end{small}
\end{eqnarray}
where $d_i$ is the number of neighbors of sensor $i$. Their second largest eigenvalue modulus are $0.5722$ (optimized weights) and $0.7332$ (maximum degree network), respectively.
Fig. \ref{fig5} shows that the optimized consensus matrix achieves certain performance gain by optimizing weights for the same network topology, which is consistent with Theorem \ref{thm1}. This example shows that when fixing the network topology (which corresponds to fixing the support of ${\bf W}$, i.e., the location of the non-zeros), there are still gains in optimizing the weights to achieve better performance. 

\section{Conclusion}
In this paper, we present a new distributed change-point detection algorithm based on average consensus, where sensors can exchange CUSUM statistic with their neighbors and perform local detection. Our proposed procedure has low communication complexity and can achieve local detection. We show by numerical examples that by allowing sensors to communicate and share information with their neighbors, the sensors can be more effective in detecting asynchronous change-point locally.

\section*{Acknowledgement}

We would like to thank Professor Ansgar Steland for the opportunity to submit an invited paper. This work was partially supported by NSF grants CCF-1442635, CMMI-1538746, DMS-1830210, an NSF CAREER Award CCF-1650913, and a S.F.  Express award. 

 \bibliographystyle{spmpsci.bst}
 \bibliography{reference.bib}

\begin{thebibliography}{10}
\providecommand{\url}[1]{{#1}}
\providecommand{\urlprefix}{URL }
\expandafter\ifx\csname urlstyle\endcsname\relax
  \providecommand{\doi}[1]{DOI~\discretionary{}{}{}#1}\else
  \providecommand{\doi}{DOI~\discretionary{}{}{}\begingroup
  \urlstyle{rm}\Url}\fi

\bibitem{boyd2004fastest}
Boyd, S., Diaconis, P., Xiao, L.: Fastest mixing markov chain on a graph.
\newblock SIAM review \textbf{46}(4), 667--689 (2004)

\bibitem{buldygin1980sub}
Buldygin, V.V., Kozachenko, Y.V.: Sub-gaussian random variables.
\newblock Ukrainian Mathematical Journal \textbf{32}(6), 483--489 (1980)

\bibitem{chen2017textsf}
Chen, J., Kim, S.H., Xie, Y.: {${\rm S}^3 T$}: An efficient score-statistic for
  spatio-temporal surveillance.
\newblock arXiv preprint arXiv:1706.05331  (2017)

\bibitem{fellouris2016second}
Fellouris, G., Sokolov, G.: Second-order asymptotic optimality in multisensor
  sequential change detection.
\newblock IEEE Transactions on Information Theory \textbf{62}(6), 3662--3675
  (2016)

\bibitem{hadjiliadis2009one}
Hadjiliadis, O., Zhang, H., Poor, H.V.: One shot schemes for decentralized
  quickest change detection.
\newblock IEEE Transactions on Information Theory \textbf{55}(7), 3346--3359
  (2009)

\bibitem{karagiannis2011vehicular}
Karagiannis, G., Altintas, O., Ekici, E., Heijenk, G., Jarupan, B., Lin, K.,
  Weil, T.: Vehicular networking: A survey and tutorial on requirements,
  architectures, challenges, standards and solutions.
\newblock IEEE communications surveys \&amp; tutorials \textbf{13}(4), 584--616
  (2011)

\bibitem{kurt2017multi}
Kurt, M.N., Wang, X.: Multi-sensor sequential change detection with unknown
  change propagation pattern.
\newblock arXiv preprint arXiv:1708.04722  (2017)

\bibitem{lakhina2004diagnosing}
Lakhina, A., Crovella, M., Diot, C.: Diagnosing network-wide traffic anomalies.
\newblock In: ACM SIGCOMM Computer Communication Review, vol.~34, pp. 219--230.
  ACM (2004)

\bibitem{li2016order}
Li, S., Wang, X.: Order-2 asymptotic optimality of the fully distributed
  sequential hypothesis test.
\newblock arXiv preprint arXiv:1606.04203  (2016)

\bibitem{LiuMei2017}
Liu, K., Mei, Y.: Improved performance properties of the {CISPRT} algorithm for
  distributed sequential detection.
\newblock Submitted  (2017)

\bibitem{lorden1971procedures}
Lorden, G.: Procedures for reacting to a change in distribution.
\newblock The Annals of Mathematical Statistics pp. 1897--1908 (1971)

\bibitem{ludkovski2012bayesian}
Ludkovski, M.: Bayesian quickest detection in sensor arrays.
\newblock Sequential Analysis \textbf{31}(4), 481--504 (2012)

\bibitem{mei2010efficient}
Mei, Y.: Efficient scalable schemes for monitoring a large number of data
  streams.
\newblock Biometrika \textbf{97}(2), 419--433 (2010)

\bibitem{page1954continuous}
Page, E.S.: Continuous inspection schemes.
\newblock Biometrika \textbf{41}(1/2), 100--115 (1954)

\bibitem{raghavan2010quickest}
Raghavan, V., Veeravalli, V.V.: Quickest change detection of a markov process
  across a sensor array.
\newblock IEEE Transactions on Information Theory \textbf{56}(4), 1961--1981
  (2010)

\bibitem{SahuKar2016}
Sahu, A.K., Kar, S.: Distributed sequential detection for {Gaussian}
  shift-in-mean hypothesis testing.
\newblock IEEE Transactions on Signal Processing \textbf{64}(1), 89--103 (2016)

\bibitem{tartakovsky2002efficient}
Tartakovsky, A.G., Veeravalli, V.V.: An efficient sequential procedure for
  detecting changes in multichannel and distributed systems.
\newblock In: Information Fusion, 2002. Proceedings of the Fifth International
  Conference on, vol.~1, pp. 41--48. IEEE (2002)

\bibitem{tartakovsky2003quickest}
Tartakovsky, A.G., Veeravalli, V.V.: Quickest change detection in distributed
  sensor systems.
\newblock In: Proceedings of the 6th International Conference on Information
  Fusion, pp. 756--763 (2003)

\bibitem{tartakovsky2008asymptotically}
Tartakovsky, A.G., Veeravalli, V.V.: Asymptotically optimal quickest change
  detection in distributed sensor systems.
\newblock Sequential Analysis \textbf{27}(4), 441--475 (2008)

\bibitem{valero2017real}
Valero, M., Clemente, J., Kamath, G., Xie, Y., Lin, F.C., Song, W.: Real-time
  ambient noise subsurface imaging in distributed sensor networks.
\newblock In: Smart Computing (SMARTCOMP), 2017 IEEE International Conference
  on, pp. 1--8. IEEE (2017)

\bibitem{xiao2004fast}
Xiao, L., Boyd, S.: Fast linear iterations for distributed averaging.
\newblock Systems \& Control Letters \textbf{53}(1), 65--78 (2004)

\bibitem{xiao2005scheme}
Xiao, L., Boyd, S., Lall, S.: A scheme for robust distributed sensor fusion
  based on average consensus.
\newblock In: Proceedings of the 4th international symposium on Information
  processing in sensor networks, p.~9. IEEE Press (2005)

\bibitem{xie2013sequential}
Xie, Y., Siegmund, D.: Sequential multi-sensor change-point detection.
\newblock Annals of Statistics \textbf{41}(2), 670--692 (2013)

\end{thebibliography}


\section*{Appendix}
For  simplicity, we first inded the sensors from $1$ to $N$.
Use vector ${\bf L}^t$ to represent $\left(L({\bf x}_1^{t}),\cdots,L({\bf x}_N^{t})\right)^{\rm T}$, vector ${\bf y}^t$ to represent $\left(y_1^t,\cdots,y_N^t\right)^{\rm T}$
and vector ${\bf z}^t$ to represent $\left(z_1^t,\cdots,z_N^t\right)^{\rm T}$. Now, our algorithm can be rewritten as
\begin{align}\label{rule}
&{\bf y}^{t+1} = ( {\bf y}^t + {\bf L}^{t+1})^+, {\bf z}^{t+1} = {\bf W}({\bf z}^t + {\bf y}^{t+1}-{\bf y}^t), 
&T_s = \inf\big\{t>0: \Vert {\bf z}^t \Vert_{\infty} \ge b\big\}.
\end{align}
Firstly, we prove some useful lemmas before reaching the main results.

\begin{remark}
Since ${\bf W}{\bf 1} = {\bf 1}$, ${\bf W}^{\rm T}={\bf W}$ and $z^0_v = y^0_v=0$,  simple proof by m.i. can verify
\begin{align} \label{sumz_y}
\sum_{v\in\mathcal{V}} z_v^t = \sum_{v\in\mathcal{V}} y_v^t \mbox{ holds for all } t,
\end{align}
\end{remark}
\begin{lemma}({\bf Hoeffding Inequality})
Let $X_i$ be independent, mean-zero, $\sigma_i^2$-sub-Gaussian random variables. Then for $K>0$,
$
\mathbb{P}(\sum_{i=1}^{n} X_{n}\ge K)\le  \exp \left(-\frac{K^2}{2\sum_{i=1}^{n}\sigma_i^2}\right).
$
\end{lemma}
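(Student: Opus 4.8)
The plan is to use the classical Chernoff (exponential Markov) method, which turns a tail probability into a moment-generating-function (MGF) estimate and then optimizes a free parameter. First I would recall the defining property of a mean-zero, $\sigma^2$-sub-Gaussian random variable $X$: its MGF satisfies $\mathbb{E}[e^{\lambda X}] \le e^{\lambda^2 \sigma^2/2}$ for every $\lambda \in \mathbb{R}$. For any $\lambda>0$, applying Markov's inequality to the monotone transform $x \mapsto e^{\lambda x}$ gives
\[
\mathbb{P}\Big(\sum_{i=1}^{n} X_i \ge K\Big) = \mathbb{P}\Big(e^{\lambda \sum_i X_i} \ge e^{\lambda K}\Big) \le e^{-\lambda K}\, \mathbb{E}\Big[e^{\lambda \sum_i X_i}\Big].
\]

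The second step exploits independence: because the $X_i$ are independent, the MGF of the sum factorizes, and I would bound each factor by its sub-Gaussian estimate,
\[
\mathbb{E}\Big[e^{\lambda \sum_i X_i}\Big] = \prod_{i=1}^{n} \mathbb{E}[e^{\lambda X_i}] \le \prod_{i=1}^{n} e^{\lambda^2 \sigma_i^2/2} = \exp\Big(\tfrac{\lambda^2}{2}\sum_{i=1}^{n} \sigma_i^2\Big).
\]
Combining the two displays yields $\mathbb{P}(\sum_i X_i \ge K) \le \exp(-\lambda K + \tfrac{\lambda^2}{2}\sum_i \sigma_i^2)$ for all $\lambda>0$. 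The final step is to tune $\lambda$ so the bound is tightest: the exponent is a convex quadratic in $\lambda$ minimized at $\lambda^\star = K/\sum_i \sigma_i^2 > 0$, and substituting back produces the exponent $-K^2/(2\sum_i \sigma_i^2)$, which is exactly the claimed tail bound.

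There is no serious obstacle here, as this is a textbook concentration result; the only point that genuinely consumes the hypotheses is the factorization combined with the sub-Gaussian MGF control, which is precisely where independence and the additivity of the variance proxies enter. The one caveat worth flagging is the precise form of the sub-Gaussian definition: if the paper's notion is stated through tail bounds rather than through the MGF estimate, then a short preliminary step translating the tail condition into the MGF bound (at the cost of an absolute constant in the variance proxy) would be required before the argument above applies. Assuming the MGF form of the definition, the Chernoff argument is self-contained and complete.
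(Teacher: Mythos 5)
Your proof is correct. The paper states this lemma without proof, treating it as a standard fact, so there is no internal argument to compare against; your Chernoff-method derivation (Markov's inequality on $e^{\lambda\sum_i X_i}$, factorization of the moment generating function by independence, the sub-Gaussian bound $\mathbb{E}[e^{\lambda X_i}]\le e^{\lambda^2\sigma_i^2/2}$, and optimization at $\lambda^\star = K/\sum_i\sigma_i^2$) is exactly the canonical proof the authors implicitly rely on. Your caveat about the definition of sub-Gaussianity is also resolved in the paper's favor: the reference the authors cite for sub-Gaussian distributions (Buldygin and Kozachenko) defines the class precisely through the MGF bound you assume, so no preliminary tail-to-MGF translation step is needed and your argument applies verbatim.
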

\begin{lemma}\label{lemma_3}
  Consider a sequence of random variables $X_k \stackrel{i.i.d.}\sim \mathcal{P}$, for $k=1,2,\ldots,t$. $\mathcal{P}$ is a sub-Gaussian distribution and its mean and variance are defined as $\mu_1<0$ and $\sigma_1$, respectively. Given $K>0$ large enough, we have
\begin{align*}
\sum_{k=1}^{t}\mathbb{P}\left(\sum_{q=1}^{k} X_{q} > K\right) <  -\frac{2K}{\mu_1} \exp\left(\frac{2K\mu_1}{\sigma_1^{2}}\right).
\end{align*}
\end{lemma}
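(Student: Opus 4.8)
The plan is to reduce the statement to a one-dimensional large-deviation estimate for the random walk $S_k:=\sum_{q=1}^{k}X_q$ and then to control the resulting series. First I would note that, since each $X_q$ has mean $\mu_1<0$, the centered variables $X_q-\mu_1$ are zero-mean and $\sigma_1^2$-sub-Gaussian, and $\mathbb{E}[S_k]=k\mu_1<0$. For every $k$ the event $\{S_k>K\}$ coincides with $\{\sum_{q=1}^{k}(X_q-\mu_1)>K-k\mu_1\}$, and because $\mu_1<0$ the threshold $K-k\mu_1=K+k(-\mu_1)$ is strictly positive for all $k\ge 0$, so the Hoeffding inequality stated above applies to the $k$ centered summands and gives
\begin{align*}
\mathbb{P}\bigl(S_k>K\bigr)\le \exp\!\left(-\frac{(K-k\mu_1)^2}{2k\sigma_1^2}\right).
\end{align*}

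The next step is a purely algebraic completion of the square that isolates the target exponential factor. Using $(K-k\mu_1)^2=(K+k\mu_1)^2-4Kk\mu_1$ and dividing by $2k\sigma_1^2$ yields the exact identity
\begin{align*}
\frac{(K-k\mu_1)^2}{2k\sigma_1^2}=-\frac{2K\mu_1}{\sigma_1^2}+\frac{(K+k\mu_1)^2}{2k\sigma_1^2},
\end{align*}
whose first term is constant in $k$ and whose second term is nonnegative. Consequently
\begin{align*}
\sum_{k=1}^{t}\mathbb{P}\bigl(S_k>K\bigr)\le \exp\!\left(\frac{2K\mu_1}{\sigma_1^2}\right)\sum_{k=1}^{t}\exp\!\left(-\frac{(K+k\mu_1)^2}{2k\sigma_1^2}\right),
\end{align*}
so it remains only to show that the residual sum is bounded by $-2K/\mu_1$ once $K$ is large enough.

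To bound the residual sum I would treat $h(x):=\exp\!\bigl(-(K+x\mu_1)^2/(2x\sigma_1^2)\bigr)$ as a function of a continuous variable $x>0$ and use $\sum_{k=1}^{t}h(k)\le \sum_{k=1}^{\infty}h(k)$ since the terms are nonnegative. A one-line differentiation shows the exponent $(K+x\mu_1)^2/(2x\sigma_1^2)$ is minimized at $x^{\ast}=-K/\mu_1>0$, where it vanishes, so $h$ is unimodal with a single maximum $h(x^{\ast})=1$, increasing on $(0,x^{\ast})$ and decreasing on $(x^{\ast},\infty)$. For such a function the elementary comparison $\sum_{k=1}^{\infty}h(k)\le \int_0^{\infty}h(x)\,dx+\max_x h(x)=\int_0^{\infty}h(x)\,dx+1$ holds. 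The remaining integral is a Laplace-type integral concentrated near $x^{\ast}$; since the exponent has second derivative of order $1/K$ there, a change of variables (or the standard saddle-point estimate) gives $\int_0^{\infty}h(x)\,dx=O(\sqrt{K})$, which is $o(K)$. Hence for $K$ large enough $\int_0^{\infty}h\,dx+1<-2K/\mu_1$, and chaining the three displays produces the claim with a strict inequality.

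The main obstacle is this last paragraph: everything through the extraction of the factor $\exp(2K\mu_1/\sigma_1^2)$ is mechanical, but controlling $\sum_k h(k)$ requires recognizing the unimodal, concentrated structure of the summand and establishing that its total mass grows only like $\sqrt{K}$ rather than linearly in $K$. This sublinear growth is precisely what lets the generous constant in $-2K/\mu_1=2K/(-\mu_1)$ absorb the $O(1)$ slack coming from the integral-comparison bound and the peak term, and it is exactly where the hypothesis ``$K$ large enough'' is used.
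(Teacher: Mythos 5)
Your proof is correct, and while its first half is essentially the paper's argument in disguise, the way you control the resulting series is genuinely different. Like the paper, you apply Hoeffding termwise to get $\mathbb{P}(S_k>K)\le\exp\bigl(-(K-k\mu_1)^2/(2k\sigma_1^2)\bigr)$, and your completion of the square $(K-k\mu_1)^2=(K+k\mu_1)^2-4Kk\mu_1$ is exactly the paper's inequality $\frac{K-k\mu_1}{\sqrt{k}}\ge 2\sqrt{-K\mu_1}$, both identifying the worst exponent $2K\mu_1/\sigma_1^2$ attained at $k^\ast=-K/\mu_1$. From there the routes diverge: the paper splits into the window $k\le[-2K/\mu_1]$, where it bounds every term uniformly by $\exp(2K\mu_1/\sigma_1^2)$ so the window contributes (count)$\times$(uniform bound)$=-\frac{2K}{\mu_1}\exp(2K\mu_1/\sigma_1^2)$, and the tail $k>[-2K/\mu_1]$, where it shows the consecutive-term ratio is at most $\exp(-3\mu_1^2/(8\sigma_1^2))$ and sums a geometric series bounded by $\exp(9K\mu_1/(4\sigma_1^2))/\bigl(1-\exp(-3\mu_1^2/(8\sigma_1^2))\bigr)$, which is exponentially negligible for large $K$. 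You instead extract the factor $\exp(2K\mu_1/\sigma_1^2)$ exactly and bound the entire residual series in one stroke via unimodality of $h$ plus an integral comparison and a Laplace estimate, showing the residual mass is $O(\sqrt{K})$. Your version is actually sharper than the paper's: it proves the sum is $O(\sqrt{K})\exp(2K\mu_1/\sigma_1^2)$, whereas the paper settles for the trivial $O(K)$ count on the main window, and it avoids the paper's somewhat hand-wavy ``neglect the second term'' step. Two small points to tighten. First, for a nonnegative unimodal $h$ the clean elementary bound is $\sum_{k\ge 1}h(k)\le\int_0^{\infty}h(x)\,dx+2\max_x h(x)$, since up to two integers can straddle the mode where neither the left- nor right-shifted interval comparison applies; your ``$+\max h$'' is off by one peak term, which is immaterial here since the slack against $-2K/\mu_1$ is of order $K$ versus $\sqrt{K}$. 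Second, to make the $O(\sqrt{K})$ integral estimate rigorous rather than a saddle-point heuristic, note that $K+x\mu_1=\mu_1(x-x^\ast)$, so the exponent equals $\mu_1^2(x-x^\ast)^2/(2\sigma_1^2 x)$, which is at least $\mu_1^2(x-x^\ast)^2/(4\sigma_1^2 x^\ast)$ for $x\le 2x^\ast$ and at least $\mu_1^2(x-x^\ast)/(4\sigma_1^2)$ for $x\ge 2x^\ast$; this yields a Gaussian piece of width $O(\sqrt{x^\ast})=O(\sqrt{K})$ plus an exponentially small tail, completing the claim with explicit constants depending only on $\mu_1$ and $\sigma_1$.
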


\begin{proof} 
{\bf Case 1}. For $0 < t \le [-\frac{2K}{\mu_1}]$, by Hoeffding Inequality, we have
\begin{align}\label{eq2}
\sum_{k=1}^{t}\mathbb{P}\left(\sum_{q=1}^{k} X_{q} > K\right)
< \sum_{k=1}^{t}  {\rm exp}\left(-\frac{1}{2}(\frac{K-k\mu_1}{\sqrt{k}\sigma_1})^{2}\right).
\end{align}
Using $\frac{K-k\mu_1}{\sqrt{k}} \ge 2\sqrt{-K\mu_1}$ and $t \le -\frac{2K}{\mu_1}$, we obtain
\begin{align}\label{eq4}
 \sum_{k=1}^{t}\mathbb{P}\left(\sum_{q=1}^{k} X_{q} > K\right) < -\frac{2K}{\mu_1} {\rm exp}\left(\frac{2K\mu_1}{\sigma_1^{2}}\right).
\end{align}
{\bf Case 2.} For $[-\frac{2K}{\mu_1}]+1 \le t $, by \eqref{eq4}, we have
\begin{align}\label{eq51}
 \sum_{k=1}^{t}\mathbb{P}\left(\sum_{q=1}^{k} X_{q} > K\right) 
< -\frac{2K}{\mu_1} {\rm exp}\left(\frac{2K\mu_1}{\sigma_1^{2}}\right)+ \sum_{k=[-\frac{2K}{\mu_1}]+1}^{t}\mathbb{P}\left(\sum_{q=1}^{k} X_{q} > K\right).
\end{align}
Utilizing Hoeffding Inequality and  $k\ge[-\frac{2K}{\mu_1}]+1$, we obtain
\begin{align}\label{eq60}
 \mathbb{P}\left(\sum_{q=1}^{k} X_{q} > K\right)<{\rm exp}\left(-\frac{1}{2}(\frac{K-k\mu_1}{\sqrt{k}\sigma_1})^{2}\right)
\le  {\rm exp}\left(\frac{9K\mu_1}{4\sigma_1^2}\right).
\end{align}
Besides,  for $k \ge [-\frac{2K}{\mu_1}]+1$ ,we have
\begin{align}\label{eq6}
 \frac{ {\rm exp}\left(-\frac{1}{2}(\frac{K-(k+1)\mu_1}{\sqrt{k+1}\sigma_1})^{2}\right) }{ {\rm exp}\left(-\frac{1}{2}(\frac{K-k\mu_1}{\sqrt{k}\sigma_1})^{2}\right)}
= {\rm exp}\left(-\frac{\mu_1^2}{2\sigma_1^2}+\frac{K^2}{2k(k+1)\sigma_1^2}\right)
< {\rm exp}\left(-\frac{3\mu_1^2}{8\sigma_1^2}\right).
\end{align}
Then, from Hoeffding Inequality, \eqref{eq60} and \eqref{eq6},  we derive
\begin{align}\label{eq71}
& \sum_{k=[-\frac{2K}{\mu_1}]+1}^{t}\mathbb{P}\left(\sum_{q=1}^{k} X_{q} > K\right) 
 < \sum_{k=[-\frac{2K}{\mu_1}]+1}^{t}{\rm exp}\left(-\frac{1}{2}(\frac{K-k\mu_1}{\sqrt{k}\sigma_1})^{2}\right)\\
 <& \sum_{k=[-\frac{2K}{\mu_1}]+1}^{t}    {\rm exp}\left(\frac{9K\mu_1}{4\sigma_1^2}\right) \times {\rm exp}\left(-\frac{3\mu_1^2}{8\sigma_1^2} \left(k-[-\frac{2K}{\mu_1}]-1\right)\right)
<\frac{ {\rm exp}\left(\frac{9K\mu_1}{4\sigma_1^2}\right)}{1-{\rm exp}\left(-\frac{3\mu_1^2}{8\sigma_1^2}\right)}.\nonumber
\end{align}
From \eqref{eq71}, we know that the second term on the RHS of  \eqref{eq51} is a small quantity compared with the first term provided $K$ large enough, so we can neglect it to obtain
\begin{align}\label{eq5}
\sum_{k=1}^{t}\mathbb{P}\left(\sum_{q=1}^{k} X_{q} > K\right) <  -\frac{2K}{\mu_1} {\rm exp}\big(\frac{2K\mu_1}{\sigma_1^{2}}\big).
\end{align}
\end{proof}
Note that
$
\mathbb E_{f_1} [L(x_j^t)] = \mu_1 < 0
$, 
$
\mathbb E_{f_2} [L(x_j^t)] = \mu_2 > 0
$, 
$
\mbox{Var}_{f_1} [L(x_j^t)] = \sigma_1^2 
$, 
$
\mbox{Var}_{f_2} [L(x_j^t)] = \sigma_2^2 
$.

Given $\varepsilon >0$ and $p>0$, Define event
$$B(\varepsilon,p)= \{ |L({\bf x}_i^t)|< \varepsilon b,\mbox{ for }i=1,\ldots,N\mbox{ and }t=1,\ldots,p \},$$ 
where $b$ is the pre-specified threshold in detection. Besides, we use $\{T_s = p\}$ to represent the event that our algorithm detects the change at $t=p$.  We have the following lemma
\begin{lemma}\label{lemma_1}
For any $t \le p $, we have
 \begin{align*}
 \{T_s = t\}\wedge   B(\varepsilon,p)
 \subset   \big\{  \frac{\sum _{j=1}^{N} y_{j}^{t}}{N}> (1- \frac{ \sqrt{N}\varepsilon  \lambda_{2} }{1- \lambda_{2}})b \big\} \wedge B(\varepsilon,p).
 \end{align*}
\end{lemma}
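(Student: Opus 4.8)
The plan is to exploit the average-consensus structure: on the detection event some sensor $v^{*}$ satisfies $z_{v^{*}}^{t}\ge b$, and I will show that every $z_{v}^{t}$ stays within $\frac{\sqrt N\varepsilon\lambda_{2}}{1-\lambda_{2}}b$ of the running average $\bar z^{t}:=\frac1N\sum_{v}z_{v}^{t}$ on the good event $B(\varepsilon,p)$. Since the Remark gives $\bar z^{t}=\frac1N\sum_{j}y_{j}^{t}$, forcing $z_{v^{*}}^{t}\ge b$ then forces $\frac1N\sum_{j}y_{j}^{t}$ to lie close to $b$, which is exactly the claim.

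First I would fix $t\le p$, pick $v^{*}$ with $z_{v^{*}}^{t}\ge b$ (it exists because $\|{\bf z}^{t}\|_{\infty}\ge b$ on $\{T_{s}=t\}$), and set ${\bf e}^{t}:={\bf z}^{t}-\bar z^{t}{\bf 1}$. The triangle inequality reduces everything to controlling $\|{\bf e}^{t}\|$:
\begin{align*}
\bar z^{t}\ \ge\ z_{v^{*}}^{t}-\big|z_{v^{*}}^{t}-\bar z^{t}\big|\ \ge\ b-\|{\bf e}^{t}\|_{\infty}\ \ge\ b-\|{\bf e}^{t}\|_{2}.
\end{align*}

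Next I would derive a contraction recursion for ${\bf e}^{t}$. Let ${\bf P}={\bf I}-\frac1N{\bf 1}{\bf 1}^{\rm T}$ project onto ${\bf 1}^{\perp}$. Because ${\bf W}{\bf 1}={\bf 1}$ and ${\bf W}^{\rm T}={\bf W}$, one checks that ${\bf W}{\bf P}={\bf P}{\bf W}$ and $\|{\bf W}{\bf v}\|_{2}\le\lambda_{2}\|{\bf v}\|_{2}$ for every ${\bf v}\in{\bf 1}^{\perp}$. Applying ${\bf P}$ to the update ${\bf z}^{t+1}={\bf W}({\bf z}^{t}+{\bf y}^{t+1}-{\bf y}^{t})$ and using ${\bf e}^{t+1}={\bf P}{\bf z}^{t+1}$ gives ${\bf e}^{t+1}={\bf W}\big({\bf e}^{t}+{\bf P}({\bf y}^{t+1}-{\bf y}^{t})\big)$, so that, since ${\bf P}$ is nonexpansive,
\begin{align*}
\|{\bf e}^{t+1}\|_{2}\ \le\ \lambda_{2}\Big(\|{\bf e}^{t}\|_{2}+\|{\bf y}^{t+1}-{\bf y}^{t}\|_{2}\Big).
\end{align*}

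The one nonroutine input is that the reflected CUSUM increment is dominated by the log-likelihood increment: examining the two branches of $y_{v}^{t+1}=\max\{y_{v}^{t}+L_{v}({\bf x}_{v}^{t+1}),0\}$ shows $|y_{v}^{t+1}-y_{v}^{t}|\le|L_{v}({\bf x}_{v}^{t+1})|$ for every $v$. Hence on $B(\varepsilon,p)$ we have $\|{\bf y}^{s}-{\bf y}^{s-1}\|_{2}<\sqrt N\,\varepsilon b$ for every $s\le t\le p$. Starting from ${\bf e}^{0}={\bf 0}$ and summing the resulting geometric recursion yields $\|{\bf e}^{t}\|_{2}<\frac{\lambda_{2}}{1-\lambda_{2}}\sqrt N\,\varepsilon b$; plugging this into the first display and replacing $\bar z^{t}$ by $\frac1N\sum_{j}y_{j}^{t}$ via the Remark gives the stated strict bound. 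I expect the main obstacle to be this middle step: the reflection $(\cdot)^{+}$ is nonlinear and does not commute with ${\bf W}$, so the consensus error must be isolated carefully through the projection ${\bf P}$, using that ${\bf W}$ contracts only on ${\bf 1}^{\perp}$ and that projecting the increments is nonexpansive; once this decomposition is set up, the remaining estimates are elementary.
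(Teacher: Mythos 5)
Your proposal is correct and follows essentially the same route as the paper: your projection recursion ${\bf e}^{t+1}={\bf W}\big({\bf e}^{t}+{\bf P}({\bf y}^{t+1}-{\bf y}^{t})\big)$, once unrolled, is exactly the paper's closed-form bound $\Vert\sum_{k=1}^{t}({\bf W}^{t-k+1}-\tfrac{1}{N}{\bf 1}{\bf 1}^{\rm T})({\bf y}^{k}-{\bf y}^{k-1})\Vert_{2}\le\sum_{k=1}^{t}\lambda_{2}^{t-k+1}\Vert{\bf L}^{k}\Vert_{2}$, and you use the same key ingredients (the domination $|y_{v}^{t+1}-y_{v}^{t}|\le|L_{v}({\bf x}_{v}^{t+1})|$, the identity $\sum_{v}z_{v}^{t}=\sum_{v}y_{v}^{t}$, and the geometric series giving $\lambda_{2}/(1-\lambda_{2})$). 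The only minor point, which the paper's own proof shares, is the slight looseness between $z_{j}^{t}\ge b$ at the stopping time and the strict inequality in the conclusion.
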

\begin{proof}
Note that
$
{\bf W} = \frac 1 N {\bf 1} {\bf 1}^\intercal + \sum_{j=2}^N \lambda_j u_j u_j^\intercal.
$
Throughout the proof, we assume under the condition that $B(\varepsilon,p)$ occurs. First, by the recursive form of our algorithm in \eqref{rule}, the result in \eqref{sumz_y} and the definition of  $B(\varepsilon,p)$, for any sensor $j$, we have
\begin{align*}
 &|z_{j}^{t} - \frac{\sum _{i=1}^{N} y_{i}^{t}}{N} | = | z_{j}^{t} - \frac{\sum_{i=1}^{N} z_{i}^{t}}{N} | 
 \le \Vert {\bf z}^t - \frac{\sum_{i=1}^{N} z_{i}^{t}}{N} {\bf 1}\Vert_2 
  = \Vert \sum_{k=1}^{t} ({\bf W}^{t-k+1}-\frac{1}{N}{\bf 1}{\bf 1}^{\rm T})({\bf y}^{k}-{\bf y}^{k-1}) \Vert_{2}\\
 \le&   \sum_{k=1}^{t}  \lambda_{2}^{t-k+1} \Vert{\bf y}^{k}-{\bf y}^{k-1} \Vert_{2} 
 \le   \sum_{k=1}^{t}  \lambda_{2}^{t-k+1} \Vert{\bf L}^{k} \Vert_{2} 
 \le   \sum_{k=1}^{t}  \lambda_{2}^{t-k+1} \sqrt{N}\varepsilon b
\le  \frac{  \sqrt{N}\varepsilon  \lambda_{2}b}{1- \lambda_{2}},
\end{align*}
where $\lambda_2$ is the second largest eigenvalue modulus of ${\bf W}$. If $\{T_s = t\}$ happens, then $z_{j}^{t}>b$ holds for some $j$, which, together with the inequality above, leads to
$
\frac{\sum _{j=1}^{N} y_{j}^{t}}{N}> (1- \frac{  \sqrt{N}\varepsilon  \lambda_{2}}{1- \lambda_{2}})b.
$
\end{proof}

\begin{lemma}\label{lemma_2}
Assume a sequence of independent random variables $Y_1,\cdots,Y_N$. Take any integer $M>N$ and let
\begin{align*}
C(M,N) \big\{(i_1,\cdots,i_N): i_j \in \mathbb{N} \ \mbox{and}\  M-N \le \sum_{j=1}^{N}i_{j} \le M \big\}.
 \end{align*}
 Then we have
\begin{align*}
\mathbb{P}\left(\sum_{j=1}^{N} Y_j > K \right) \le \sum_{C(M,N)} \prod_{j=1}^{N}\mathbb{P}\left( Y_j > \frac{i_j K}{M} \right).
\end{align*}
\end{lemma}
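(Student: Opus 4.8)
The plan is to prove the inequality as a finite union bound over the index set $C(M,N)$, using independence to factor each summand and a proportional discretization to establish the key set inclusion.

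First I would rewrite the right-hand side probabilistically. Because $Y_1,\ldots,Y_N$ are independent, for each fixed tuple $(i_1,\ldots,i_N)$ we have $\prod_{j=1}^{N}\mathbb{P}(Y_j > i_j K/M) = \mathbb{P}\big(\bigcap_{j=1}^{N}\{Y_j > i_j K/M\}\big)$. Denoting $A_{(i_1,\ldots,i_N)} := \bigcap_{j=1}^{N}\{Y_j > i_j K/M\}$, the right-hand side is exactly $\sum_{C(M,N)}\mathbb{P}(A_{(i_1,\ldots,i_N)})$. Since $C(M,N)$ is finite, the union bound gives $\sum_{C(M,N)}\mathbb{P}(A_{(i_1,\ldots,i_N)}) \ge \mathbb{P}\big(\bigcup_{C(M,N)} A_{(i_1,\ldots,i_N)}\big)$, so the whole claim reduces to the set inclusion
\[
\{\textstyle\sum_{j=1}^{N} Y_j > K\} \;\subseteq\; \bigcup_{(i_1,\ldots,i_N)\in C(M,N)} A_{(i_1,\ldots,i_N)}.
\]

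The heart of the argument is this inclusion, which I would prove by exhibiting, for each realization with $\sum_j Y_j > K$, one explicit tuple of $C(M,N)$ whose event contains that realization. Writing $S=\sum_{l} Y_l > K$ (and using $Y_j\ge 0$, as in the intended application where the $Y_j$ are reflected CUSUM statistics), I would allocate the integer budget $M$ across the coordinates in proportion to the $Y_j$, namely $i_j = \lfloor M Y_j / S\rfloor \in \mathbb{N}$. The sum constraint then holds on both sides: $i_j \le M Y_j/S$ gives $\sum_j i_j \le M$, while $i_j > M Y_j/S - 1$ gives $\sum_j i_j > M-N$, so $(i_1,\ldots,i_N)\in C(M,N)$. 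For the coordinatewise requirement, $S>K$ forces $i_j K/M \le K Y_j/S = Y_j(K/S) < Y_j$ on every coordinate with $Y_j>0$, which places the realization in $A_{(i_1,\ldots,i_N)}$ and completes the inclusion.

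The step I expect to be the main obstacle is the strict coordinatewise inequality at the boundary, i.e. coordinates with $Y_j=0$, where $i_j=0$ makes the requirement $Y_j>0$ fail as an equality; the sum bounds, by contrast, follow from the rounding slack of at most $N$ that the range form $M-N\le\sum_j i_j\le M$ is designed to absorb. I would handle this either by passing to the weak events $\{Y_j\ge i_j K/M\}$, which the proportional allocation satisfies for all coordinates including $Y_j=0$ and which agree with the strict version at the level of probabilities when the $Y_j$ have no atoms on the grid $\{i K/M\}$, or by an arbitrarily small perturbation of the thresholds in the atomic case. I would also confirm the convention $0\in\mathbb{N}$, needed so that small coordinates may receive $i_j=0$, and verify that the union-bound and independence steps use only the stated independence hypothesis.
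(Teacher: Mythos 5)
Your proposal is correct and follows essentially the same route as the paper's proof: the identical proportional allocation $i_j = \lfloor M Y_j / \sum_{l} Y_l \rfloor$, the same verification that $(i_1,\ldots,i_N)\in C(M,N)$ via the rounding slack of at most $N$, and the same reduction to a union bound plus factorization by independence. Your extra care about the boundary coordinates with $Y_j=0$ (where the strict inequality $Y_j > i_j K/M$ fails, so one must pass to weak events or perturb thresholds) and about the implicit nonnegativity of the $Y_j$ is a legitimate refinement of points the paper's proof glosses over with ``we can easily verify,'' and in the paper's application both issues are harmless because the $y_j^t$ are nonnegative CUSUM statistics and factors with $i_j=0$ are bounded by $1$ anyway.
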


\begin{proof}
$\forall (y_1,\cdots,y_N) \in \{(Y_1,\cdots,Y_N): \sum_{j=1}^{N} Y_{j}> K\}$, take
$
i_j = \left[ \frac{y_j M}{ \sum_{j=1}^{N} y_{j}}\right]$,  for $j=1,\ldots,N.
$
We can easily verify that $M-N \le \sum_{j=1}^{N} i_j \le M$ and $y_{j} > i_j K/M$. Therefore, we have
\begin{align*}
 \big\{(Y_1,\cdots,Y_N):\ \sum_{j=1}^{N} Y_{j}> K\big\}
\subset   \bigcup_{ C(M,N)} \left\{ (Y_1,\cdots,Y_N): \ Y_{j} > \frac{i_j K}{M} \mbox{ for }j=1,\ldots,N \right\}.
\end{align*}
Since $Y_{j}$'s are independent with each other, we obtain
\begin{align*}
\mathbb{P}\left(\sum_{j=1}^{N} Y_{j}> K \right) \le \sum_{C(M,N)} \prod_{j=1}^{N}\mathbb{P}\left( Y_{j} > \frac{i_j K}{M} \right).
\end{align*}
\end{proof}

\subsection{Proof of Theorem $1$}\label{proof-thm1}
First, we calculate the probability that our algorithm stops within time $p$. The value of $p$ is to be specified later.
\begin{align*}
\nonumber &\sum_{t=1}^{p} \mathbb{P}(T_s=t) 
=  \sum_{t=1}^{p} \left( \mathbb{P}\left(\{T_s=t\} \wedge B(\varepsilon,p)\right)+ \mathbb{P}\left(\{T_s=t\} \wedge \bar{B}(\varepsilon,p)\right) \right) \\
 \nonumber \le & \sum_{t=1}^{p} \mathbb{P}\left(\{T_s=t\}\wedge B(\varepsilon,p)\right)+ \mathbb{P}\left(\bar{B}(\varepsilon,p)\right) \\
 \le & \sum_{t=1}^{p} \mathbb{P}\left( \{T_s=t\} \wedge B(\varepsilon,p) \right)+ 2Np\times {\rm exp}\left(-\frac{(\varepsilon b - \mu_1)^2}{2\sigma_1^2}\right),
\end{align*}
where the last inequality is from Hoeffding Inequality and assumptions in Section 3. 
The value of $\varepsilon$ is to be specified later.

Denote $\bar{b} = N(1- \frac{  \sqrt{N}\varepsilon  \lambda_{2}}{1- \lambda_{2}})b$, then $\bar{b}$ will also tend to infinity as $b$ tends to infinity provided $\varepsilon$ small enough. By Lemma \ref{lemma_1}, we have
\begin{align}\label{eq7}
 \sum_{t=1}^{p} \mathbb{P}(T_s=t)  \le& \sum_{t=1}^{p} \mathbb{P}\left(\{\sum _{j=1}^{N} y_{j}^{t}> \bar{b}\} \wedge B(\varepsilon,p)\ \right) 
+ 2Np \times {\rm exp}\left(-\frac{(\varepsilon b - \mu_1)^2}{2\sigma_1^2}\right).
\end{align}
By Lemma \ref{lemma_2}, we have
\begin{align}\label{eq10}
 \mathbb{P}\left(\{\sum _{j=1}^{N} y_{j}^{t}> \bar{b}\}\wedge B(\varepsilon,p)\ \right)
\le  \sum_{C(M,N)} \prod_{j=1}^{N}\mathbb{P}\left( \{y_{j}^{t} > \frac{i_j \bar{b}}{M}\} \wedge B_j(\varepsilon,p) \right),
\end{align}
where $B_j(\varepsilon, p) = \{ |L({\bf x}_j^t)|< \varepsilon b,\mbox{ for }t=1,\ldots,p \}$ and the value of $M$ is to be specified later.
If $y_{j}^{t}>i_j \bar{b}/M$, then there must exist $1\le k\le t $ such that $y_{j}^{t} = \sum_{q=k}^{t} L({\bf x}_j^q) \ge i_j \bar{b}/M$. So, we have
\begin{align}\label{eq8}
 \mathbb{P}\left( \{y_{j}^{t} > \frac{i_j \bar{b}}{M}\} \wedge B_j(\varepsilon,p) \right) 
\le \sum_{k=1}^{t} \mathbb{P}\left( \{\sum_{q=k}^{t} L({\bf x}_j^q)> \frac{i_j \bar{b}}{M}\}\wedge B_j(\varepsilon,p) \right).
\end{align}
The influence of $B_j(\varepsilon,p)$ in \eqref{eq8} can be interpreted as truncating the original distribution of $L(\cdot)$.  It's obvious that the new distribution is still sub-Gaussian. Besides, the mean and variance almost keep unchanged provided $\varepsilon b$ large enough.

If $i_j=0$, we just set the upper bound of the probability in \eqref{eq8} to be $1$. If $i_j\neq 0$, by Lemma \ref{lemma_3}, we have
\begin{align}\label{eq9}
 \sum_{k=1}^{t} \mathbb{P}\left( \{\sum_{q=k}^{t} L({\bf x}_j^q) > \frac{i_j \bar{b}}{M}\}  \wedge B_j(\varepsilon,p) \right) 
  < -\frac{2i_j \bar{b}}{M\mu_1}  {\rm exp}\left(\frac{2i_j \bar{b}\mu_1}{M\sigma_1^{2}}\right).
\end{align}
Plugging \eqref{eq9} into \eqref{eq10}, we get
\begin{align}\label{eq11}
 &\sum_{C(M,N)} \prod_{j=1}^{N}\mathbb{P}\left(\{ y_{j}^{t} > \frac{i_j \bar{b}}{M} \} \wedge B_j(\varepsilon,p) \right)  <  \sum_{C(M,N)} \prod_{i_j\neq 0}  \frac{2i_j \bar{b}}{-M\mu_1}  {\rm exp}\left(\frac{2i_j \bar{b}\mu_1}{M\sigma_1^{2}}\right)\\
\nonumber  \le& \sum_{C(M,N)}  \left( \frac{2\bar{b}}{-\mu_1}\right)^N {\rm exp}\left(\frac{2\bar{b}\mu_1}{\sigma_1^{2}}(1-\frac{N}{M})\right)
 = |C(M,N)| \left( \frac{2\bar{b}}{-\mu_1}\right)^N {\rm exp}\left(\frac{2\bar{b}\mu_1}{\sigma_1^{2}}(1-\frac{N}{M})\right).
\end{align}
Plugging \eqref{eq11} and \eqref{eq10} into \eqref{eq7}, we obtain
\begin{equation}\label{eq91}
 \sum_{t=1}^{p} \mathbb{P}(T_s=t)  < \sum_{t=1}^{p} |C(M,N)| \left( \frac{2\bar{b}}{-\mu_1}\right)^N {\rm exp}\left(\frac{2\bar{b}\mu_1}{\sigma_1^{2}}(1-\frac{N}{M})\right) + 2Np \times {\rm exp}\left(-\frac{(\varepsilon b - \mu_1)^2}{2\sigma_1^2}\right)
 \end{equation}
 \begin{equation}
  =  p\left|C(M,N)\right| \left( \frac{2\bar{b}}{-\mu_1}\right)^N {\rm exp}\left(\frac{2\bar{b}\mu_1}{\sigma_1^{2}}(1-\frac{N}{M})\right) + 2Np\times  {\rm exp}\left(-\frac{(\varepsilon b - \mu_1)^2}{2\sigma_1^2}\right).
\end{equation}
Next, we will show that as $b$ tends to infinity, the second term on the RHS of  \eqref{eq91} is a small quantity in comparison with the first term if we choose the value of $M$ and $\varepsilon$ properly. Note that $2Np$ is a small quantity in comparison with $ p\left|C(M,N)\right| \left( -2\bar{b}/\mu_1\right)^N$, so we only require
$\frac{2\bar{b}\mu_1}{\sigma_1^{2}}(1-\frac{N}{M}) \ge -\frac{(\varepsilon b - \mu_1)^2}{2\sigma_1^2}.$
Choose $M=(N+1)^2$. Recall that $\bar{b} = N(1- \frac{  \sqrt{N}\varepsilon  \lambda_{2}}{1- \lambda_{2}})b$, the equation above can be rewritten as 
\begin{align}\label{eq92}
\frac{(\varepsilon b - \mu_1)^2}{2\sigma_1^2}
\ge -\frac{2(N^3+N^2+N)\mu_1 b}{(N+1)^2\sigma_1^2}\left(1- \frac{  \sqrt{N}\varepsilon  \lambda_{2}}{1- \lambda_{2}}\right).
\end{align}
To ensure that \eqref{eq92} holds as $b$ tends to infinity, $\varepsilon = 2\sqrt{-N\mu_1/b}$ is sufficient. 
Plugging the value of $M$ and $\varepsilon$ into \eqref{eq91} and neglecting the second term, we get
\begin{align*}
& \mathbb{P}\left(T_s \le p \right)\le  |C((N+1)^2,N)| \left( \frac{2\bar{b}}{-\mu_1}\right)^N   \\
& \cdot {\rm exp}\big(\frac{2(N^3+N^2+N)\mu_1 b}{(N+1)^2\sigma_1^{2}}-\frac{4N(N^3+N^2+N) \sqrt{-\mu_1}\mu_1\lambda_{2}\sqrt{b}}{(N+1)^2(1-\lambda_{2})\sigma_1^2}+\ln(p)\big).
\end{align*}
So $\forall l > -\frac{4N(N^3+N^2+N) \sqrt{-\mu_1}\mu_1\lambda_{2}}{(N+1)^2(1-\lambda_{2})\sigma_1^2}$,  if we choose
$
p = {\rm exp}\left(-\frac{2(N^3+N^2+N)\mu_1 b}{(N+1)^2\sigma_1^{2}} - l\sqrt{b}\right),
$
\begin{align*}
&  \lim_{b\rightarrow +\infty} \mathbb{P}\left(T_s \le p \right) \le \\
& \lim_{b\rightarrow +\infty} |C((N+1)^2,N)| \left( \frac{2\bar{b}}{-\mu_1}\right)^N   {\rm exp}\left(-\left(l+\frac{4N(N^3+N^2+N) \sqrt{-\mu_1}\mu_1\lambda_{2}}{(N+1)^2(1-\lambda_{2})\sigma_1^2}\right)\sqrt{b}\right)
=0,
\end{align*}
which together with the definition of ${\rm ARL}$ leads to
$
 {\rm ARL} \ge  p$, $\forall l > -\frac{4N(N^3+N^2+N) \sqrt{-\mu_1}\mu_1\lambda_{2}}{(N+1)^2(1-\lambda_{2})\sigma_1^2}$. 
This leads to our desired result 
when $b$ tends to infinity.
\subsection{Proof of Lemma $1$}\label{proof-lem1}
First of all, note that $\mathbb{P}\left(T_s =+\infty\right)=0$, so given  $\varepsilon>0$, we have
\begin{align}\label{eq00}
{\rm EDD} \le \frac{b(1+\varepsilon)}{\mu_2} + \sum_{t=[\frac{b(1+\varepsilon)}{\mu_2}]+1}^{+\infty}\mathbb{P}\left(T_s=t\right) t.
\end{align}
If $T_s=t$, then we have that $z_j^{t-1}<b$ holds for all $j$. Since $\sum_{j=1}^{N} z_{j}^{t-1} = \sum_{j=1}^{N} y_{j}^{t-1}$,  there must exist some $y_j^{t-1}<b$. Therefore, we have
\begin{align}\label{eq12}
\sum_{t=[\frac{b(1+\varepsilon)}{\mu_2}]+1}^{+\infty}\mathbb{P}\left(T_s=t\right) t 
\le  \sum_{t=[\frac{b(1+\varepsilon)}{\mu_2}]+1}^{+\infty} \sum_{j=1}^{N} \mathbb{P}\left( y_j^{t-1}<b \right) t.
\end{align}
Note that $y_j^{t-1}\ge\sum_{q=1}^{t-1}L({\bf x}_j^q) $, together with Hoeffding Inequality, we get
\begin{align}\label{eq13}
   \mathbb{P}\left( y_j^{t-1}<b \right)t \le& \mathbb{P}\left( \sum_{q=1}^{t-1}L({\bf x}_j^q)<b \right)
 = {\rm exp}\left(-\frac{1}{2}\left(\frac{b-(t-1)\mu_2}{\sqrt{t-1}\sigma_2}\right)^{2}\right)t.
\end{align}
When $b$ is large enough, for any $t>[\frac{b(1+\varepsilon)}{\mu_2}]$, utilizing the similar technique in \eqref{eq6}, we get
\begin{align}\label{eq14}
\frac{{\rm exp}\left(-\frac{1}{2}(\frac{b-t\mu_2}{\sqrt{t}\sigma_2})^{2}\right)}{ {\rm exp}\left(-\frac{1}{2}(\frac{b-(t-1)\mu_2}{\sqrt{t-1}\sigma_2})^{2}\right)} \times \frac{t+1}{t}
\le {\rm exp}\left( -\frac{b}{2}(1-\frac{1}{(1+\varepsilon)^2})\right).
\end{align}
Plugging \eqref{eq13} and \eqref{eq14} into \eqref{eq12}, utilizing the similar technique in  \eqref{eq71}, we get
\begin{align}\label{eq15}
 & \sum_{t=[\frac{b(1+\varepsilon)}{\mu_2}]+1}^{+\infty}\mathbb{P}\left(T_s=t\right) t
 \le  \sum_{t=[\frac{b(1+\varepsilon)}{\mu_2}]+1}^{+\infty} N\times {\rm exp}\left(-\frac{1}{2}\left(\frac{b-(t-1)\mu_2}{\sqrt{t-1}\sigma_2}\right)^{2}\right)t\\
 \le & N \left(\frac{b(1+\varepsilon)}{\mu_2}+1\right)
 \frac{{\rm exp}\left(-\frac{\varepsilon^2 b}{2(1+\varepsilon)\sigma_2^2}\right)}{1-{\rm exp}\left( -\frac{b}{2}(1-\frac{1}{(1+\varepsilon)^2})\right)}.
\end{align}
Note that $\forall \varepsilon >0$, as $b$ tends to infinity, the RHS of \eqref{eq15} would converge to zero. Therefore, by \eqref{eq00}, we get
$
{\rm EDD}\le \frac{b\big(1+o(1)\big)}{\mu_2}.
$

\clearpage


\clearpage

\end{document}